\documentclass[11pt]{article}
	
	\newcommand{\blind}{0}
	
    \makeatletter
    \renewcommand\section{\@startsection {section}{1}{\z@}%
        {-3.5ex \@plus -1ex \@minus -.2ex}%
        {2.3ex \@plus.2ex}%
        {\normalfont\fontfamily{phv}\fontsize{16}{19}\bfseries}}
    \renewcommand\subsection{\@startsection{subsection}{2}{\z@}%
        {-3.25ex\@plus -1ex \@minus -.2ex}%
        {1.5ex \@plus .2ex}%
        {\normalfont\fontfamily{phv}\fontsize{14}{17}\bfseries}}
    \renewcommand\subsubsection{\@startsection{subsubsection}{3}{\z@}%
        {-3.25ex\@plus -1ex \@minus -.2ex}%
         {1.5ex \@plus .2ex}%
         {\normalfont\normalsize\fontfamily{phv}\fontsize{14}{17}\selectfont}}
    \makeatother
	
	\usepackage{amsmath}
    \usepackage{bm}
	\usepackage{graphicx}
    \usepackage{subcaption}
	\usepackage{enumerate}
	\usepackage{xcolor}
	\usepackage{natbib} 
	\usepackage{url} 
	
    \usepackage{amsfonts, amsthm, latexsym, amssymb}
    \usepackage{lineno}
    \usepackage{cleveref}
    \usepackage{booktabs}
    \usepackage{siunitx}
    \usepackage{placeins}
    \usepackage{multirow}
    \usepackage{pbox}

    \newcommand{\titletext}{Multi-output Orthogonal Gaussian Processes for Noisy Simulator Outputs}
    \newcommand{\vect}[1]{\boldsymbol{#1}}
    \newcommand{\mat}[1]{\mathbf{#1}}
    \newcommand{\R}{\mathbb{R}}
    
    \newcommand{\E}{\mathbb{E}}
    \newcommand{\cov}[1]{\operatorname{Cov}({#1})}
    \renewcommand{\vec}[1]{\operatorname{vec}({#1})}
    \newcommand{\diag}[1]{\operatorname{diag}({#1})}
    \newcommand{\bdiag}[1]{\operatorname{bdiag}({#1})}
    \newcommand{\transpose}{^{\top}}
    \newcommand{\inv}{^{-1}}
    \newcommand{\eps}{\varepsilon}
    \newcommand{\tr}[1]{\operatorname{tr}\left(#1\right)}

    \newcommand{\Sigerror}{\mat{\Sigma}_\eps}
    \newcommand{\gpparam}{\vect{\theta}}
    
    \newtheorem{definition}{Definition}
    \newtheorem{theorem}{Theorem}
    \newtheorem{proposition}{Proposition}
    \newtheorem{lemma}{Lemma}

\begin{document}
		
		\def\spacingset#1{\renewcommand{\baselinestretch}%
			{#1}\small\normalsize} \spacingset{1}
		
		\if0\blind
		{
			\title{\bf {\titletext}}
            \author{Evan C. Barnett$^a$ and Moses Y.-H. Chan$^b$ \\ $^a$ Department of Engineering Sciences and Applied Mathematics \\ $^b$ Department of Industrial Engineering and Management Sciences\\Northwestern University, Evanston, Illinois, USA}
            \date{July 31, 2026}
			\maketitle
		} \fi
		
		\if1\blind
		{

            \title{\bf {\titletext}}
			\author{Author information is purposely removed for double-masked review}
			
\bigskip
			\bigskip
			\bigskip
			\begin{center}
				{\LARGE\bf {\titletext}}
			\end{center}
			\medskip
		} \fi
		\bigskip
		
\begin{abstract}

Computer simulations can model physical processes but are often too expensive to produce enough runs for calibration, sensitivity analysis, prediction, and uncertainty quantification. As a result, statistical surrogates are frequently used as cheaper alternatives that can be trained on a small number of simulation runs. Gaussian processes (GP) are well-suited as surrogates, as they provide flexible, nonlinear regression and closed-form uncertainty quantification. However, standard GPs are insufficient replacements for more complex simulators, such as stochastic or multi-output simulators. In addition, typical GPs also struggle to separate fitted regression coefficients from residual process variation. We introduce multi-output orthogonal Gaussian process (MOOGP) to tackle the problems mentioned above. This paper has three main contributions: (i) a definition of orthogonality for multi-output Gaussian processes, (ii) a construction that enforces orthogonality by conditioning each covariance function, and (iii) a structured likelihood formulation that significantly improves computational scaling. In a trend-recovery illustrative example, MOOGP recovers the true trend while the non-orthogonalized counterpart fails to do so, to the extent it reverses the sign of the trend. Numerical experiments and an application in heavy-ion collision simulations further demonstrate the interpretability and predictive advantage of MOOGP.  

\vspace{10pt}

	\end{abstract}
			
	\noindent%
	{\it Keywords:} Gaussian processes, identifiability, surrogate modeling, stochastic emulator, multi-output emulation

	\spacingset{1.5} 

\section{Introduction} \label{s:intro}

Computer simulations are essential tools for studying the dynamics of complex systems when direct experimentation is costly, slow, or only partially observable. A simulator models the relationship between controllable inputs and outputs, which can be used to understand underlying behavior of the system and support design decisions. However, computer simulations are often computationally demanding to accurately represent the system of interest, for examples, in engineering design, nuclear physics, and climate modeling \citep{higdon2015bayesian,phillips2021band,plumlee2021high,chan2024constructing,li2026additive}. One setting in which these computational challenges arise is the Bayesian calibration and uncertainty quantification for a heavy-ion collision model \citep{phillips2021band,liyanage2023bayesian}, our primary application considered in Section~\ref{s:application}. In an application like this, only a limited number of simulator runs is feasible. This general statistical problem is well-researched in the computer experiments literature: A finite collection of simulator runs is selected, commonly using a space-filling design, and a statistical surrogate model is built to predict the simulator's output (with uncertainty estimates) at untried inputs \citep{sacks1989design,santner2003design,gramacy2020surrogates,joseph2026experimental}. Such surrogates enable prediction, sensitivity analysis, optimization, or calibration when these tasks cannot be performed directly through physical experimentation or computer simulation, for example, see \citet{kennedy2001bayesian}.

Gaussian process (GP) models are widely used as statistical surrogates, also called emulators, for computer experiments because they provide flexible nonparametric (near) interpolation or smoothing and closed-form uncertainty quantification \citep{rasmussen2006gaussian,gramacy2020surrogates}. A standard GP combines a mean function (e.g., regression) with a stochastic residual process. The mean function specifies the expected response, while the GP residual models the remaining variation. Modeling separate mean and residual components is attractive in engineering design and scientific discovery because the regression terms in the mean function can encode low-order trends and provide interpretation, while the GP component preserves flexibility. However, it is not guaranteed that the mean and residual are learning mutually exclusive components, a failure illustrated in \Cref{fig:low_order_trend}, where the fitted trend of a conventional GP misses a known linear effect badly enough to reverse its sign. Because the GP residual is not constrained in any way with respect to the mean space by default, it can absorb variation lying in the span of the regression basis, which leaves the regression coefficients unidentifiable and their interpretation unreliable. Orthogonal GP models address this issue by constraining the GP residual to lie within the orthogonal complement of the regression space, so that the residual cannot absorb variation intended for the mean function \citep{plumlee2018orthogonal}.

The present work is motivated by a setting in which this identifiability problem
arises together with two additional complications: noisy simulator evaluations
and multidimensional outputs. In many modern simulation studies, the simulator
output is not a single deterministic scalar.  A run may return a time series, a
spatial field, or several coupled physical quantities.
Moreover, simulator evaluations are frequently stochastic, so repeated runs at the
same input do not necessarily produce the same outputs
\citep{baker2022analyzing}. The randomness enters through two mechanisms. It may
be intrinsic to the code, generated by pseudo-random draws, as in the agent-based epidemic models  \citet{fadikar2018calibrating} or in Monte Carlo solvers that
approximate the solution of a deterministic differential equation by averaging
random sample paths \citep{herbei2014estimating}. Alternatively, it may be
injected by randomizing inputs that the experimenter does not control, so that the
reported output is a sample average over a distribution of those inputs
\citep{marrel2012global}. In either case, uncertainty in
the surrogate has at least two components: uncertainty about the mean function and
irreducible simulation noise, whose magnitude can vary substantially across output
components. Stochastic kriging and nugget-based
GP models provide tools for scalar stochastic simulation
\citep{ankenman2010stochastic,gramacy2012cases,binois2018practical}, but they do
not by themselves address cross-output dependence or ensure that regression coefficients are identifiable and interpretable.

When a computer simulation produces multiple outputs for each input setting, the emulator must predict a response vector rather than a single scalar \citep{fricker2013multivariate}. The simplest strategy to accomplish this is to fit independent scalar emulators to each output coordinate \citep{gu2016parallel,ogara2025hetgpy,surer2025puq}, but this discards dependence among outputs and can misrepresent joint predictive uncertainty. Separable covariance models impose a
Kronecker product between an input covariance and an output covariance, which
produces substantial computational benefits but forces all output dimensions to
share the same input-space correlation structure, e.g., see \citep{conti2010bayesian,hung2015analysis}.  More
flexible approaches use basis or latent-factor representations, in which the
high-dimensional output is expressed as a linear combination of lower-dimensional
latent processes. Representative constructions include principal-component emulators for
high-dimensional computer model output \citep{higdon2008computer,liyanage2023bayesian,chan2024constructing}, outer-product
and Kronecker emulators \citep{rougier2008efficient}, latent-factor and
multi-task Gaussian processes \citep{teh2005semiparametric,bonilla2008multi},
linear model of coregionalization and convolution-process constructions
\citep{alvarez2011computationally}, and more recent scalable
multi-output GP formulations \citep{bruinsma2020scalable}. \citet{alvarez2012kernels}, \citet{liu2018remarks}, and \citet{lartaud2022multi} give broader reviews of multi-output covariance
construction. These methods share
the central idea that output dependence can be learned through a structured
covariance or latent representation, rather than an unrestricted covariance over all
input-output pairs.

This paper introduces the multi-output orthogonal Gaussian process (MOOGP), a
surrogate model for noisy vector-valued simulator outputs. MOOGP represents the latent simulator response as the sum of a deterministic regression component and a multi-output GP residual. The residual is built from independent latent GPs mixed through an output basis, allowing the model to share information across output dimensions. The key difference from a standard multi-output GP is that the latent residual processes are orthogonalized with respect to the regression space. This prevents the residual process from absorbing variation intended for the mean function, while retaining the dependence-sharing advantages of latent multi-output GP models.

The main contributions of this work are threefold. First, we define a multi-output orthogonality condition that extends scalar orthogonal GP emulation to vector-valued stochastic processes. Second, we show how to enforce this condition within a multi-output GP by orthogonalizing the latent covariance functions. Third, we develop efficient computational methods that exploit the low-rank output basis and covariance structure, making likelihood evaluation scalable. Separate numerical examples show the effect of orthogonality on trend identification from the effect of output sharing on prediction. A controlled illustration example shows improved trend recovery under orthogonality, while a $2\times2$ ablation attributes the main predictive gain to output sharing. Subsequent benchmarks show that predictive differences between MOOGP and MOGP depend on the application.
\if0\blind
{
The proposed method is implemented in a PyPI-installable open-source package
\texttt{mooGP}. The functionalities to reproduce the numerical
results and figures in this paper are available at
\url{https://github.com/evancbarnett/mooGP}.
} \fi
\if1\blind
{
The proposed method is implemented in a PyPI-installable open-source package. The code and scripts to reproduce the numerical
results and figures in this paper are available in the submitted code folder for review.
} \fi

The remainder of the paper is organized as follows.  \Cref{s:background} reviews
univariate, orthogonal, and multi-output GP used for emulation.  \Cref{s:setting} introduces
the setting and notations, \Cref{s:moogp_model} sets up the MOOGP model, and \Cref{s:orthogonality} defines the multi-output orthogonality condition and effective construction of the MOOGP covariance.  \Cref{s:illustration} provides the controlled illustration examples in trend recovery with orthogonality and analysis of improvement sources. \Cref{s:numerical} details the numerical experiments to compare MOOGP with modern surrogate methods in a multi-output borehole example and the heavy-ion collision application.  \Cref{s:conclusions_discussion} closes the paper with remarks.

\section{Gaussian Processes Background} \label{s:background}

\subsection{Univariate Gaussian process} \label{s:background.1}

Let \(\mathcal{X}\subset \R^d\) denote the input domain and suppose first that a
simulator returns a scalar response.  Given training inputs
\(\vect{x}_1,\ldots,\vect{x}_n\), a standard GP emulator can be written as
\begin{align}
    y(\vect{x}) &= f(\vect{x})+\eps, \nonumber \\
    f(\vect{x}) &= \vect{g}(\vect{x})^\top \vect{\beta}+z(\vect{x}), \label{eq:univ_gp}
\end{align}
where \(\vect{g}(\vect{x})\in \R^r\) is a vector of fixed regression functions,
\(\vect{\beta}\in \R^r\) is a vector of regression coefficients, and
\(z(\cdot)\sim \mathcal{GP}(0,c(\cdot,\cdot;\gpparam))\) is a zero-mean
Gaussian process.  The error term \(\eps\) represents observation error,
simulation noise, or a nugget effect.  For deterministic computer experiments,
one often sets this term to zero or treats the nugget as a small numerical
regularization.  For stochastic simulations, \(\eps\) represents intrinsic
uncertainty and should be modeled as part of the predictive distribution
\citep{ankenman2010stochastic,gramacy2012cases}.

Let \(\vect{y}=(y(\vect{x}_1),\ldots,y(\vect{x}_n))^\top\),
\(\mat{G}=(\vect{g}(\vect{x}_1),\ldots,\vect{g}(\vect{x}_n))^\top\), and
\(\mat{C}= (c(\vect{x}_i,\vect{x}_j;\gpparam))_{i,j=1}^n\). Assume that training errors are independent of the latent process and satisfy
\( \eps_i\stackrel{\mathrm{iid}}{\sim}\mathcal{N}(0,\tau^2)\) for \( i=1,\ldots,n. \)
Then
\[
    \vect{y}\mid \vect{\beta},\gpparam,\tau^2
    \sim
    \mathcal{N}(\mat{G}\vect{\beta}, \mat{C}+\tau^2\mat{I}_n).
\]
Evaluating the resulting likelihood requires factorizing
\(\mat{C}+\tau^2\mat{I}_n\), which costs \(\mathcal{O}(n^3)\) and is the standard
bottleneck in exact GP inference.
For a new input \(\vect{x}'\), prediction follows from the conditional
distribution of a multivariate normal vector.  Writing
\(\vect{c}(\vect{x}')=(c(\vect{x}',\vect{x}_1;\gpparam),\ldots,c(\vect{x}',\vect{x}_n;\gpparam))^\top\),
the predictive mean is the fitted regression trend plus a covariance-weighted
residual correction,
\[
    \E[y(\vect{x}')\mid \vect{y}]
    =
    \vect{g}(\vect{x}')^\top \vect{\beta}
    +
    \vect{c}(\vect{x}')^\top(\mat{C}+\tau^2\mat{I}_n)^{-1}
    (\vect{y}-\mat{G}\vect{\beta}),
\]
with a corresponding predictive variance given by the usual Schur-complement
formula.  This structure explains the appeal of GP emulation: the same covariance
model governs both smoothing and uncertainty quantification.  It also reveals
the source of mean--residual confounding.  Both \(\vect{g}(\cdot)^\top
\vect{\beta}\) and \(z(\cdot)\) contribute to the fitted function, and without
additional restrictions, the GP residual is free to contain components that lie in
the span of the regression basis.

\subsection{Orthogonal Gaussian process} \label{s:background.2}

Orthogonal GP \citep{plumlee2018orthogonal} modifies the residual process in \Cref{eq:univ_gp} so
that it cannot explain variation in the regression space
\(\mathcal{G}=\operatorname{span}\{g_1,\ldots,g_r\}\).  For scalar functions
\(u,v\in L^2(\mathcal{X})\), define the inner product
\[
    \langle u,v\rangle
    =
    \int_{\mathcal{X}} u(\vect{x})v(\vect{x})\,d\vect{x}.
\]
Let \(\mathcal{G}^{\perp}=\{u\in L^2(\mathcal{X}):\langle u,\phi\rangle=0
\text{ for all }\phi\in \mathcal{G}\}\) denote the orthogonal complement of
\(\mathcal{G}\) in \(L^2(\mathcal{X})\).  The residual process is restricted to
\(\mathcal{G}^{\perp}\), so that
\(\langle g_j,z\rangle=0\) for each \(j=1,\ldots,r\) with probability one.  Under
mild regularity conditions, this can be achieved by replacing the original
kernel \(c(\cdot,\cdot;\gpparam)\) with the conditioned kernel
\citep{plumlee2018orthogonal}
\begin{equation} \label{eq:scalar_ortho_kernel_background}
    c^*(\vect{x},\vect{x}';\gpparam)
    =
    c(\vect{x},\vect{x}';\gpparam)
    -
    \vect{h}(\vect{x})^\top
    \mat{H}^{-1}
    \vect{h}(\vect{x}'),
\end{equation}
where
\[
    \vect{h}(\vect{x})
    =
    \int_{\mathcal{X}} c(\vect{x},\vect{\xi};\gpparam)
    \vect{g}(\vect{\xi})\,d\vect{\xi},
    \qquad
    \mat{H}
    =
    \int_{\mathcal{X}}\int_{\mathcal{X}}
    \vect{g}(\vect{\xi})
    c(\vect{\xi},\vect{\xi}';\gpparam)
    \vect{g}(\vect{\xi}')^\top
    d\vect{\xi}\,d\vect{\xi}'.
\]
The resulting process has covariance \(c^*\) and is orthogonal to
\(\mathcal{G}\) almost surely.  This construction preserves the flexibility of
the GP residual while assigning all variation in the regression space to the
mean component.  In the scalar-output setting, the result is a more identifiable
decomposition between systematic trend and stochastic residual.  MOOGP extends
this idea by imposing the same type of residual-space restriction on each
component of a vector-valued stochastic process.
\subsection{Multi-output Gaussian process} \label{s:background.3}

Now suppose the simulator has a latent response
\(\vect{f}(\vect{x})=(f_1(\vect{x}),\ldots,f_p(\vect{x}))^\top\in \R^p\).
A multi-output GP specifies a joint distribution over all outputs and
all input locations.  The modeling challenge is to capture dependence both
across the input space and across output dimensions without making inference
too expensive.  If each output is modeled independently, the joint predictive
distribution factorizes across outputs and its covariance is block diagonal.
This is computationally convenient but cannot represent cross-output dependence
or correlated predictive uncertainty.  At the other extreme, an unrestricted covariance over all $n$ inputs and $p$ output dimensions is highly flexible but difficult to parameterize, and
working directly with it requires factorizing a dense \(np\times np\) matrix at cost
\(\mathcal{O}(n^3p^3)\).

Structured covariance models sit between these extremes.  In a separable model, for any output dimensions $\ell, m \in\{1,\ldots,p\}$,
\[
    \cov{f_\ell(\vect{x}),f_m(\vect{x}')}
    =
    \Sigma_{\ell m}c(\vect{x},\vect{x}'),
\]
where \(c\) is a valid covariance kernel and
\(\mat{\Sigma}\in\R^{p\times p}\) is positive semidefinite and describes
output dependence.  For \(n\)
training inputs, let
\(\mat{F}=(\vect{f}(\vect{x}_1),\ldots,\vect{f}(\vect{x}_n))^\top\in
\R^{n\times p}\), and let \(\operatorname{vec}(\mat{F})\) stack its output
columns.  With \(\mat{C}=(c(\vect{x}_i,\vect{x}_{i'}))_{i,i'=1}^n\) and \( \cov{\operatorname{vec}(\mat{F})}=\mat{\Sigma}\otimes\mat{C}, \)
where
\(\otimes\) denotes the Kronecker product. This structure can be
exploited for efficient linear algebra \citep{rougier2008efficient}.  The main
limitation is that all outputs share the same input correlation function.  This
can be too restrictive when different output coordinates vary over different
input scales or represent different physical quantities
\citep{fricker2013multivariate}.

Basis-representation and latent-factor models relax this restriction. Let
\(\vect{z}(\vect{x})\in\mathbb{R}^q\) collect independent latent GPs, with
\[
    z_k(\cdot)\sim \mathcal{GP}(0,c_k(\cdot,\cdot;\gpparam_k)),
    \qquad k=1,\ldots,q,
\]
where every \(c_k\) is a valid covariance kernel.  Let
\(\mat{\Psi}=(\vect{\psi}_1,\ldots,\vect{\psi}_q)\in \R^{p\times q}\)
be an output loading matrix.  The residual process $\vect{w}(\vect{x})=\mat{\Psi}\vect{z}(\vect{x})$
has cross-covariance
\begin{equation} \label{eq:lmc_background}
    \cov{\vect{w}(\vect{x}),\vect{w}(\vect{x}')}
    =
    \sum_{k=1}^q
    \vect{\psi}_k\vect{\psi}_k^\top
    c_k(\vect{x},\vect{x}').
\end{equation}
This is the linear model of coregionalization form.  It includes separable
models as a special case but allows multiple latent correlation structures,
each contributing a rank-one output covariance.  When \(q<p\), the model also
acts as a dimension-reduction device, connecting it to principal-component and
latent-basis emulators for high-dimensional outputs
\citep{higdon2008computer,paulo2012calibration,higdon2015bayesian,liyanage2023bayesian,chan2024constructing}.

For noisy multi-output simulation, let
\(\mat{B}\in\mathbb{R}^{r\times p}\) denote the regression coefficient matrix.
An observation model can then be written as
\[
    \vect{y}(\vect{x}_i)
    =
    \mat{B}^\top \vect{g}(\vect{x}_i)
    +
    \mat{\Psi}\vect{z}(\vect{x}_i)
    +
    \vect{\eps}_i,
    \qquad
    \vect{\eps}_i\stackrel{\mathrm{iid}}{\sim}
    \mathcal{N}_p(\vect{0},\Sigerror),
    \qquad
    \Sigerror
    =
    \operatorname{diag}(\sigma_{\epsilon,1}^2,\ldots,
    \sigma_{\epsilon,p}^2),
\]
with every
\(\sigma_{\epsilon,\ell}^2>0\).  The implementation optionally partitions the
ordered outputs into declared contiguous groups and constrains the variances to
be equal within each group.  Singleton groups, which allow a separate variance
for each output, are the default.
Let
\(\mat{Y}=(\vect{y}(\vect{x}_1),\ldots,\vect{y}(\vect{x}_n))^\top\).
Under the stacking convention above, the covariance induced by
\Cref{eq:lmc_background} is
\[
    \mat{K}_{\mathrm{MOGP}}
    :=
    \cov{\operatorname{vec}(\mat{Y})}
    =
    \sum_{k=1}^q
    \vect{\psi}_k\vect{\psi}_k^\top \otimes \mat{C}_k
    +
    \Sigerror\otimes \mat{I}_n,
\]
where \(\mat{C}_k=(c_k(\vect{x}_i,\vect{x}_{i'};\gpparam_k))_{i,i'=1}^n\).  This is
the baseline MOGP structure used in the proposed model.  MOOGP keeps the same
latent output representation and observation-error model, but replaces each
\(\mat{C}_k\) by a matrix \(\mat{C}^*_k\) that is implied by an orthogonalized kernel
\(c_k^*\).  Consequently, the model shares information across output dimensions
through \(\mat{\Psi}\), accounts for simulation noise through \(\Sigerror\), and
enforces an interpretable decomposition between the regression component and the
multi-output residual process.

\section{Multi-output orthogonal GP} \label{s:main_model}

\subsection{Setting and notations} \label{s:setting}

We seek a surrogate for a stochastic simulator whose output at an input
$\vect{x}\in\mathcal{X}\subset\R^d$ is a noisy, vector response
$\vect{y}(\vect{x})\in\R^p$. Suppose $n$ simulator runs are available at inputs
$\vect{x}_1,\ldots,\vect{x}_n$. The goals are to predict the
latent response $\vect{f}(\vect{x})$ at untried inputs, identify the linear trend, and quantify simulation noise.
We retain the multi-output notation of \Cref{s:background.3}: regression
functions $\vect{g}(\cdot)\in\R^r$ with coefficient matrix $\mat{B}$, latent
processes $\vect{z}^*(\cdot)\in\R^q$ mixed through the loading matrix
$\mat{\Psi}$, and error covariance $\Sigerror$.
Throughout, \(i\) and \(i'\) index simulator runs, \(j\) indexes input coordinates,
\(s\) indexes regression functions, \(k\) indexes latent components, and \(\ell\)
indexes outputs.

The $n$ runs are stacked row-wise into the input, data, regression, latent, and error
matrices $\mat{X}\in\R^{n\times d}$, $\mat{Y}\in\R^{n\times p}$,
$\mat{G}\in\R^{n\times r}$,
$\mat{Z}^*\in\R^{n\times q}$, and $\mat{E}\in\R^{n\times p}$, whose $i$-th rows are
$\vect{x}_i\transpose$, $\vect{y}(\vect{x}_i)\transpose$,
$\vect{g}(\vect{x}_i)\transpose$,
$\vect{z}^*(\vect{x}_i)\transpose$, and $\vect{\eps}_i\transpose$,
respectively. Each output column is centered at its training mean and divided by
its training sample standard deviation, so $\mat{Y}$ denotes the standardized
response throughout. 

We write $\mat{I}_n$ for the $n\times n$ identity matrix, $|\mat{A}|$ for the determinant of $\mat{A}$, $\vec{\mat{A}}$ for the column-stacking vectorization of $\mat{A}$, and
$\bdiag{\cdot}$ and $\diag{\cdot}$ denoting the constructing of block-diagonal and diagonal
matrices.

Our results rely on the Woodbury identities for the matrix inverse and
determinant \citep{harville2008matrix}: For nonsingular $\mat{R}\in\R^{n\times n}$,
$\mat{T}\in\R^{p\times p}$, and matrices $\mat{S}\in\R^{n\times p}$,
$\mat{U}\in\R^{p\times n}$,
\begin{align}
    (\mat{R}+\mat{S}\mat{T}\mat{U})\inv
    &= \mat{R}\inv-\mat{R}\inv\mat{S}
    (\mat{T}\inv+\mat{U}\mat{R}\inv\mat{S})\inv
    \mat{U}\mat{R}\inv,
    \label{eq: woodbury}\\
    |\mat{R}+\mat{S}\mat{T}\mat{U}|
    &=|\mat{R}|\,|\mat{T}|\,|\mat{T}\inv+\mat{U}\mat{R}\inv\mat{S}|. \nonumber
\end{align}

\subsection{Main model} \label{s:moogp_model}

The orthogonalized kernel $c_k^*$ is constructed in
\Cref{s:orthogonality}. The MOOGP model is
\begin{align} \label{eq:mogp_model}
    \vect{y}(\vect{x}) &= \vect{f}(\vect{x})+\vect{\eps}, \\
    \vect{f}(\vect{x}) &= \mat{B}\transpose \vect{g}(\vect{x})
    + \mat{\Psi} \vect{z}^*(\vect{x}), \nonumber \\
    z_k^*(\cdot) &\sim
    \mathcal{GP}\!\left(0,c_k^*(\cdot,\cdot;\vect{\theta}_k)\right),
    \quad k=1,\ldots,q. \nonumber
\end{align}
Here, $\vect{x}\in\mathcal{X}$ is the input,
$\vect{y}(\vect{x})\in\R^p$ is the observed output at $\vect{x}$,
$\vect{f}(\vect{x})\in\R^p$ is the latent simulator response, and
$\vect{\eps}\in\R^p$ is its observation error. For fixed model parameters,
$z_1^*,\ldots,z_q^*$ are mutually independent. At the training inputs,
    \( \vect{\eps}_i\stackrel{\mathrm{iid}}{\sim}
    \mathcal{N}_p(\vect{0},\Sigerror), \) for \(i=1,\ldots,n, \) , where
$\Sigerror=\diag{\sigma_1^2,\ldots,\sigma_p^2}$ and
$\sigma_\ell^2>0$ for every output. Thus, the outputs are conditionally
independent given $\vect{f}(\vect{x})$.

The response contains a deterministic regression term
$\mat{B}\transpose\vect{g}(\vect{x})$ and a stochastic term
$\mat{\Psi}\vect{z}^*(\vect{x})$. MOOGP constrains these terms to be orthogonal
using the condition in \Cref{s:orthogonality}. The regression functions may
contain first-order effects and interactions, as in
\citet{plumlee2018orthogonal}. The remaining variation is modeled with
latent process vector $\vect{z}^*(\cdot)\in\R^q$ and loading matrix
 \( \mat{\Psi}=(\vect{\psi}_1,\ldots,\vect{\psi}_q)\in\R^{p\times q}, \) where
\( q\leq p. \)

\begin{proposition}[Log-likelihood]
    Under the model, i.e., \Cref{eq:mogp_model}, the data $\mat{Y}$ have a Gaussian log-likelihood 
    \begin{multline}
    \mathcal{L}(\mat{B}, \mat{\Theta}, \mat{\Psi}, \Sigerror \mid \mat{X}, \mat{Y})
    =\textrm{const}-\tfrac12 \log|\mat{K}_y|
    -\tfrac12 \big(\vec{\mat{Y}}- \vect{\mu}_y\big)^\top \mat{K}_y^{-1}\big(\vec{\mat{Y}}-\vect{\mu}_y\big), \\
    \vect{\mu}_y := (\mat{I}_p \otimes \mat{G})\vec{\mat{B}}, \quad 
    \mat{K}_y := (\mat{\Psi}\otimes\mat{I}_n)\mat{C}^*(\mat{\Psi}\transpose\otimes\mat{I}_n) + \Sigerror \otimes \mat{I}_n,
    \end{multline}
    where
    $\mat{C}^*=\bdiag{\mat{C}_1^*,\ldots,\mat{C}_q^*}$,
    $\mat{C}_k^*
    =(c_k^*(\vect{x}_i,\vect{x}_{i'};\gpparam_k))_{i,i'=1}^n$,
    and $\mat{\Theta}=(\gpparam_k)_{k=1}^q$ collects all GP hyperparameters.
\end{proposition}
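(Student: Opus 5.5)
The plan is to show that $\vec{\mat{Y}}$ is multivariate normal with mean $\vect{\mu}_y$ and covariance $\mat{K}_y$. The stated log-likelihood is then just the Gaussian log-density, with $\textrm{const}=-\tfrac{np}{2}\log(2\pi)$. I would write the model at the training inputs in matrix form,
\[
\mat{Y}=\mat{G}\mat{B}+\mat{Z}^*\mat{\Psi}\transpose+\mat{E},
\]
which follows row by row from \Cref{eq:mogp_model}: the $i$-th row of the right side is $\vect{g}(\vect{x}_i)\transpose\mat{B}+\vect{z}^*(\vect{x}_i)\transpose\mat{\Psi}\transpose+\vect{\eps}_i\transpose$. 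I would then vectorize using the identity $\vec{\mat{A}\mat{M}\mat{D}}=(\mat{D}\transpose\otimes\mat{A})\vec{\mat{M}}$. This gives
\[
\vec{\mat{G}\mat{B}}=(\mat{I}_p\otimes\mat{G})\vec{\mat{B}}=\vect{\mu}_y
\quad\text{and}\quad
\vec{\mat{Z}^*\mat{\Psi}\transpose}=(\mat{\Psi}\otimes\mat{I}_n)\vec{\mat{Z}^*}.
\]

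Next I would compute the distributions of the two random pieces. The column $k$ of $\mat{Z}^*$ is $(z_k^*(\vect{x}_1),\ldots,z_k^*(\vect{x}_n))\transpose$. Since $z_k^*\sim\mathcal{GP}(0,c_k^*)$, this column is $\mathcal{N}(\vect{0},\mat{C}_k^*)$. The $q$ latent processes are mutually independent, so $\vec{\mat{Z}^*}\sim\mathcal{N}(\vect{0},\bdiag{\mat{C}_1^*,\ldots,\mat{C}_q^*})=\mathcal{N}(\vect{0},\mat{C}^*)$. A linear map of a Gaussian vector is Gaussian, so
\[
(\mat{\Psi}\otimes\mat{I}_n)\vec{\mat{Z}^*}\sim\mathcal{N}\big(\vect{0},(\mat{\Psi}\otimes\mat{I}_n)\mat{C}^*(\mat{\Psi}\transpose\otimes\mat{I}_n)\big).
\]
For the error term, the rows $\vect{\eps}_i$ are iid $\mathcal{N}_p(\vect{0},\Sigerror)$ with $\Sigerror$ diagonal. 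Hence entry $(i,\ell)$ of $\mat{E}$ is independent across all $(i,\ell)$ with variance $\sigma_\ell^2$. Under column stacking this gives $\cov{\vec{\mat{E}}}=\Sigerror\otimes\mat{I}_n$.

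Finally, I would invoke independence of the error from the latent processes, which the paper assumes as in \Cref{s:background.1}. The sum of independent Gaussian vectors is Gaussian, with mean $\vect{\mu}_y$ and covariance equal to the sum of the two covariances, namely $\mat{K}_y$. The matrix $\mat{K}_y$ is positive definite because $\Sigerror\otimes\mat{I}_n$ is positive definite (every $\sigma_\ell^2>0$) and the first term is positive semidefinite, the latter because each $c_k^*$ is a valid kernel. The density therefore exists, and taking its logarithm yields the stated formula.

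The main obstacle is bookkeeping: I must keep the Kronecker ordering consistent with column stacking, so that the $n\times n$ factors sit on the right, as in $\Sigerror\otimes\mat{I}_n$ and $\mat{I}_p\otimes\mat{G}$. A second point needs care. Validity of $c_k^*$ as a covariance (positive semidefiniteness of each $\mat{C}_k^*$) is taken from the construction of \Cref{s:orthogonality}, as in the scalar case of \citet{plumlee2018orthogonal}. Even if some $\mat{C}_k^*$ is singular, as can happen after orthogonalization, the nugget $\Sigerror$ still guarantees that $\mat{K}_y$ is invertible.
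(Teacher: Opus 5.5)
Your proposal is correct and follows the same route as the paper's proof. Both vectorize the model as $\vec{\mat{Y}}=(\mat{I}_p\otimes\mat{G})\vec{\mat{B}}+(\mat{\Psi}\otimes\mat{I}_n)\vec{\mat{Z}^*}+\vec{\mat{E}}$, read off the mean and the covariance $\mat{K}_y$, and invoke Gaussianity; your additional details (the vec--Kronecker identity, independence of the errors from the latent processes, and positive definiteness of $\mat{K}_y$ from $\Sigerror$) make explicit steps the paper leaves implicit.
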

\begin{proof}
The vectorized observation model is
\[
\vec{\mat{Y}}
=(\mat{I}_p\otimes\mat{G})\vec{\mat{B}}
+(\mat{\Psi}\otimes\mat{I}_n)\vec{\mat{Z}^*}+\vec{\mat{E}}.
\]
Its expectation and covariance are
\begin{align*}
    \E[\vec{\mat{Y}} \mid \mat{B}, \mat{\Theta}, \mat{\Psi}, \Sigerror] &= (\mat{I}_p \otimes \mat{G})\vec{\mat{B}}, \\
    \cov{\vec{\mat{Y}} \mid \mat{B}, \mat{\Theta}, \mat{\Psi}, \Sigerror} &= (\mat{\Psi}\otimes\mat{I}_n)\mat{C}^*(\mat{\Psi}\transpose\otimes\mat{I}_n) + \Sigerror \otimes \mat{I}_n.
\end{align*}
The Gaussian log-likelihood follows because $\mat{Y}$ is a sum of Gaussian variables.
\end{proof}
\begin{proposition}[Predictive equations]
    For any new input $\vect{x}'$, define
    \[
    \vect{c}_k^*(\vect{x}')
    :=
    \big(c_k^*(\vect{x}_i,\vect{x}';\gpparam_k)\big)_{i=1}^n,
    \qquad
    \mat{C}_{\vect{x}'}^*
    :=
    \bdiag{\vect{c}_1^*(\vect{x}'),\ldots,\vect{c}_q^*(\vect{x}')}
    \in\R^{nq\times q}.
    \]
    The predictive equations are
    \begin{align}
    \E[\vect{y}(\vect{x}') \mid \mat{X},\mat{Y} ; \mat{B}, \mat{\Theta}, \mat{\Psi}, \Sigerror] &= \mat{B} \transpose\vect{g}(\vect{x}')+\mat{K}_{y'y}\mat{K}_y\inv(\vec{\mat{Y}}-\vect{\mu}_y), \\
    \cov{{\vect{y}(\vect{x}')} \mid \mat{X},\mat{Y}; \mat{B}, \mat{\Theta}, \mat{\Psi}, \Sigerror} &= \mat{K}_{y'} - \mat{K}_{y'y} \mat{K}_{y}\inv \mat{K}_{yy'}, \\
    \mat{K}_{y'y} \transpose = \mat{K}_{yy'} 
    &:= (\mat{\Psi} \otimes \mat{I}_n) \mat{C}_{\vect{x}'}^* \mat{\Psi}\transpose, \nonumber \\
    \mat{K}_{y'}&:= \mat{\Psi}
    \diag{\big(c_k^*(\vect{x}',\vect{x}';\gpparam_k)\big)_{k=1}^q}
    \mat{\Psi}\transpose + \Sigerror. \nonumber
\end{align}
\end{proposition}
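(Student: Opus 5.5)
The plan is to write the new output and the training data as one jointly Gaussian vector and then apply the standard conditioning formula for multivariate normals. Fix the parameters $\mat{B},\mat{\Theta},\mat{\Psi},\Sigerror$. The model \Cref{eq:mogp_model} gives
\[
\vect{y}(\vect{x}')=\mat{B}\transpose\vect{g}(\vect{x}')+\mat{\Psi}\vect{z}^*(\vect{x}')+\vect{\eps}',
\]
where $\vect{\eps}'\sim\mathcal{N}_p(\vect{0},\Sigerror)$ is independent of the training errors and of $\vect{z}^*$. The vectorized training model from the proof of the Log-likelihood proposition supplies the other block. Both vectors are affine in the Gaussian collection $(\vect{z}^*(\vect{x}'),\vec{\mat{Z}^*},\vec{\mat{E}},\vect{\eps}')$, so the stacked vector $(\vect{y}(\vect{x}')\transpose,\vec{\mat{Y}}\transpose)\transpose$ is jointly Gaussian. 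Its means are $\mat{B}\transpose\vect{g}(\vect{x}')$ and $\vect{\mu}_y$, and the $\mat{K}_y$ block is already given by the Log-likelihood proposition.

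The remaining blocks come from the mutual independence of the $z_k^*$ and of the errors.

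For $\mat{K}_{y'}$: we have $\cov{\vect{z}^*(\vect{x}')}=\diag{(c_k^*(\vect{x}',\vect{x}'))_k}$. Hence $\cov{\mat{\Psi}\vect{z}^*(\vect{x}')+\vect{\eps}'}=\mat{\Psi}\diag{\cdot}\mat{\Psi}\transpose+\Sigerror$.

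For the cross block: the errors drop out because $\vect{\eps}'$ and $\vec{\mat{E}}$ are independent of each other and of $\vect{z}^*$. What remains is
\[
\cov{\vec{\mat{Y}},\vect{y}(\vect{x}')}=(\mat{\Psi}\otimes\mat{I}_n)\cov{\vec{\mat{Z}^*},\vect{z}^*(\vect{x}')}\mat{\Psi}\transpose .
\]
Since $\vec{\mat{Z}^*}$ stacks the columns $(z_k^*(\vect{x}_i))_{i=1}^n$ for $k=1,\dots,q$, the $(k,m)$ block of $\cov{\vec{\mat{Z}^*},\vect{z}^*(\vect{x}')}$ is $\vect{c}_k^*(\vect{x}')$ when $k=m$ and zero otherwise. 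That matrix is exactly $\mat{C}^*_{\vect{x}'}$, which yields the stated $\mat{K}_{yy'}$. Transposing gives $\mat{K}_{y'y}$, with symmetry of $c_k^*$ used to match the index order in $\vect{c}_k^*$.

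Conditioning then gives the stated mean $\mat{B}\transpose\vect{g}(\vect{x}')+\mat{K}_{y'y}\mat{K}_y\inv(\vec{\mat{Y}}-\vect{\mu}_y)$ and the Schur-complement covariance. This requires $\mat{K}_y$ to be invertible. That holds because the first term of $\mat{K}_y$ is positive semidefinite ($\mat{C}^*$ is a block-diagonal of valid covariance matrices) and $\Sigerror\otimes\mat{I}_n$ is positive definite, since every $\sigma_\ell^2>0$.

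The main obstacle is bookkeeping rather than anything conceptual: keeping the column-stacking convention consistent so that the cross-covariance really factors as $(\mat{\Psi}\otimes\mat{I}_n)\mat{C}^*_{\vect{x}'}\mat{\Psi}\transpose$ with the blocks in the right order. I would verify this entrywise. Entry $((\ell,i),\ell')$ should equal $\sum_k\Psi_{\ell k}\Psi_{\ell' k}c_k^*(\vect{x}_i,\vect{x}')$, which agrees with the linear model of coregionalization form \Cref{eq:lmc_background} with $c_k$ replaced by $c_k^*$.
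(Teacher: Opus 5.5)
Your proposal is correct and takes the same route as the paper: form the joint Gaussian distribution of $\vec{\mat{Y}}$ and $\vect{y}(\vect{x}')$, then apply the standard multivariate normal conditioning formulas. You also supply details the paper leaves implicit, namely the explicit derivation of $\mat{K}_{yy'}$ and $\mat{K}_{y'}$ from the independence of the $z_k^*$ and the errors, and the invertibility of $\mat{K}_y$ via the positive definiteness of $\Sigerror\otimes\mat{I}_n$.
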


\begin{proof}
To calculate the predictive mean and variance, we first look at the joint distribution between the training observations $\vec{\mat{Y}}$ and the observation we would like to predict $\vect{y}(\vect{x}')$. 
\begin{equation*}
\begin{bmatrix}
    \vec{\mat{Y}} \\
    \vect{y}(\vect{x}')
\end{bmatrix}
\sim
\mathcal{N}
\left( 
\begin{bmatrix}
    (\mat{I}_p \otimes \mat{G})\vec{\mat{B}} \\
    \mat{B} \transpose \vect{g}(\vect{x}')
\end{bmatrix},
\begin{bmatrix}
    \mat{K}_y & \mat{K}_{yy'} \\
    \mat{K}_{y'y} & \mat{K}_{y'}
\end{bmatrix}
\right)
\end{equation*}
We obtain the result by applying the conditional distribution results from Gaussian distributions, e.g.,  \citet[ch. 5]{hardle2015multivariate}.
\end{proof}
\subsection{Orthogonality condition for multi-output processes} \label{s:orthogonality}
One of the main goals of the MOOGP is to simultaneously estimate the regression component $\mat{B} \transpose \vect{g}(\vect{x})$ and the stochastic residual process $\mat{\Psi} \vect{z}^*(\vect{x}).$  Without an explicit constraint, the Gaussian process could absorb variation from the regression space, leading to unidentifiability of the regression coefficients \citep{plumlee2018orthogonal}.  As a result, any reasonable interpretation of the regression coefficients requires some condition to separate $\mat{B}\transpose \vect{g}(\vect{x})$ and $\mat{\Psi} \vect{z}^*(\vect{x})$. 
We formalize this separation by requiring the stochastic residual process to lie in the orthogonal complement of the regression space over the entire input domain. This approach can be considered a natural extension of the method used in the scalar-output case \citep{plumlee2018orthogonal}.

\begin{definition}[Multi-output orthogonal Gaussian process] \label{def: MOOGP}
Fix the regression basis $\vect{g}(\cdot)$ and its span $\mathcal{G}$.
Given $\mat{B}\in\R^{r\times p}$ and a fixed, deterministic
$\mat{\Psi}\in\R^{p\times q}$, let
$z_1^*,\ldots,z_q^*$ be mutually independent, mean-zero Gaussian
processes, and set
$\vect{z}^*(\vect{x})
=(z_1^*(\vect{x}),\ldots,z_q^*(\vect{x}))\transpose$. Then
$\vect{w}(\vect{x}):=\mat{\Psi}\vect{z}^*(\vect{x})\in\R^p$ is a
jointly Gaussian vector process.
Let $w_\ell (\vect{x})$ denote the $\ell$-th component of $\vect{w}(\vect{x})$. Define the multi-output Gaussian process
\[
\vect{f}(\vect{x}):= \mat{B}\transpose \vect{g}(\vect{x})+\vect{w}(\vect{x}).
\]
We say that $\vect{f}(\cdot)$ is a multi-output orthogonal Gaussian process with regression space $\mathcal{G}$ if the stochastic process satisfies the orthogonality constraints
$\langle g_s, w_\ell \rangle = 0$ for all $s=1,\ldots,r$ and $\ell=1,\ldots,p$ with probability one. Equivalently in matrix form,
\begin{align} \label{eq:ortho_cond}
\int_\mathcal{X} \vect{g}(\vect{x})\vect{w}(\vect{x})\transpose \, d\vect{x}= \mat{0}_{r \times p}  \quad w.p.~1.
\end{align}
\end{definition}
To illustrate, take any function $\phi(\cdot) \in \mathcal{G}$. It can be written as
$\phi(\vect{x})=\sum_{s=1}^r a_sg_s(\vect{x})$ for some
$\vect{a}\in\R^r$. Then, for each $\ell$,
$\langle \phi, w_\ell \rangle
=\sum_{s=1}^r a_s\langle g_s,w_\ell\rangle=0$.
Thus, $w_\ell(\cdot)\in\mathcal{G}^\perp$ almost surely for each
$\ell=1,\ldots,p$.
The condition in \Cref{eq:ortho_cond} ensures that, for each output dimension, $\mat{B}\transpose \vect{g}(\cdot)$ captures the portion of the response in $\mathcal{G}$, while the Gaussian process captures only variation in $\mathcal{G}^\perp$.

For each latent component $k=1,\ldots,q$, let
$z_k\sim\mathcal{GP}(0,c_k)$ be an GP where $c_k(\cdot,\cdot;\gpparam_k)$ is a bounded,
continuous covariance function on $\mathcal{X}\times\mathcal{X}$. Define
\(
    \vect{A}_k
    :=
    \int_{\mathcal{X}}
    \vect{g}(\vect{\xi})z_k(\vect{\xi})\,d\vect{\xi}.
\)
Let
\begin{align} \label{eq: h_ints}
\vect{h}_k(\vect{x}) 
&:= \int_{\mathcal{X}} c_k(\vect{x},\vect{\xi};\gpparam_k)\,\vect{g}(\vect{\xi})\,d\vect{\xi}\in\mathbb{R}^r,  \nonumber   \\ 
\mat{H}_k 
&:= \int_{\mathcal{X}}\int_{\mathcal{X}}
\vect{g}(\vect{\xi})\,c_k(\vect{\xi},\vect{\xi}';\gpparam_k)\,\vect{g}(\vect{\xi}')^\top\,
d\vect{\xi}\,d\vect{\xi}'\in\mathbb{R}^{r\times r}.
\end{align}
Then
\[
    \operatorname{Cov}\{z_k(\vect{x}),\vect{A}_k\}
    =\vect{h}_k(\vect{x}),
    \qquad
    \operatorname{Cov}(\vect{A}_k)=\mat{H}_k.
\]
Assuming each $\mat{H}_k$ is finite and positive definite,
conditioning $z_k$ on $\vect{A}_k=\vect{0}$ results in a mean-zero GP
with covariance
\begin{equation} \label{eq:ck_star}
c_k^*(\vect{x},\vect{x}';\gpparam_k)
:= c_k(\vect{x},\vect{x}';\gpparam_k)
- \vect{h}_k(\vect{x})^\top \mat{H}_k^{-1}\vect{h}_k(\vect{x}').
\end{equation}
The latent process $z_k^*$ is defined to have this conditional law.
\begin{lemma}\label{lem: ortho_kernel_psd}
Assume $\mathcal{X}\subset\mathbb{R}^d$ is bounded, 
$c_k(\cdot,\cdot;\gpparam_k)$ is a bounded, continuous covariance function on 
$\mathcal{X}\times\mathcal{X}$, $g_s\in L^2(\mathcal{X})$ for
$s=1,\ldots,r$, and the matrix
$\mat{H}_k$ in \Cref{eq: h_ints} exists and is positive definite.  
Then, the kernel $c_k^*(\cdot,\cdot;\gpparam_k)$ defined in \Cref{eq:ck_star} is positive semidefinite on
$\mathcal{X}$ and hence a valid covariance function \citep{plumlee2018orthogonal}.
\end{lemma}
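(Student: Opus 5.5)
The plan is to verify positive semidefiniteness directly. Fix any finite collection of points $\vect{x}_1,\ldots,\vect{x}_m\in\mathcal{X}$ and coefficients $a_1,\ldots,a_m\in\R$, and show that $\sum_{i,i'} a_i a_{i'} c_k^*(\vect{x}_i,\vect{x}_{i'};\gpparam_k)\ge 0$. Symmetry of $c_k^*$ is immediate: $c_k$ is symmetric and $\mat{H}_k$ is symmetric, since $c_k(\vect{\xi},\vect{\xi}')=c_k(\vect{\xi}',\vect{\xi})$. The natural route is probabilistic, through a Schur complement. I would build a finite-dimensional Gaussian vector whose covariance matrix contains the quadratic form of $c_k^*$ as a Schur complement.

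First, I would justify that $\vect{A}_k=\int_{\mathcal{X}}\vect{g}(\vect{\xi})z_k(\vect{\xi})\,d\vect{\xi}$ is a well-defined random vector with the stated covariances. The aim is to show $\operatorname{Cov}\{z_k(\vect{x}),\vect{A}_k\}=\vect{h}_k(\vect{x})$ and $\operatorname{Cov}(\vect{A}_k)=\mat{H}_k$. The tool is Fubini. The set $\mathcal{X}$ is bounded, $c_k$ is bounded by some $M$, and $g_s\in L^2(\mathcal{X})\subset L^1(\mathcal{X})$ (finite measure). Hence
\[
\E\int_{\mathcal{X}}|g_s(\vect{\xi})z_k(\vect{\xi})|\,d\vect{\xi}\le \sqrt{M}\,\|g_s\|_{L^1}<\infty .
\]
Similarly $\iint |g_s||g_{s'}||c_k|\le M\|g_s\|_{L^1}\|g_{s'}\|_{L^1}<\infty$. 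This makes the integral and the exchange of expectation and integration legitimate. Continuity of $c_k$ gives a measurable (mean-square continuous) version of $z_k$. The vector $(z_k(\vect{x}_1),\ldots,z_k(\vect{x}_m),\vect{A}_k^\top)^\top$ is then jointly Gaussian, as an $L^2$ limit of Riemann-type sums of Gaussian variables. Its covariance matrix is
\[
\begin{bmatrix} \mat{C} & \mat{V}\\ \mat{V}^\top & \mat{H}_k\end{bmatrix}, \qquad \mat{C}=(c_k(\vect{x}_i,\vect{x}_{i'}))_{i,i'},\quad \mat{V}=(\vect{h}_k(\vect{x}_1),\ldots,\vect{h}_k(\vect{x}_m))^\top .
\]

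Second, I would apply the Schur complement. A covariance matrix is positive semidefinite, and $\mat{H}_k$ is positive definite by assumption. Therefore $\mat{C}-\mat{V}\mat{H}_k^{-1}\mat{V}^\top\succeq 0$. To see this concretely, for $\vect{a}\in\R^m$ take $\vect{b}=-\mat{H}_k^{-1}\mat{V}^\top\vect{a}$. Evaluating the joint quadratic form at $(\vect{a},\vect{b})$ gives exactly $\vect{a}^\top(\mat{C}-\mat{V}\mat{H}_k^{-1}\mat{V}^\top)\vect{a}\ge 0$. The matrix $\mat{C}-\mat{V}\mat{H}_k^{-1}\mat{V}^\top$ is precisely the Gram matrix of $c_k^*$ at the points $\vect{x}_1,\ldots,\vect{x}_m$, so $c_k^*$ is positive semidefinite. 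Together with symmetry, this makes it a valid covariance function by Kolmogorov's extension theorem.

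The step I expect to need the most care is the first: making the stochastic integral $\vect{A}_k$ rigorous and confirming its covariance with $z_k(\vect{x})$. As a fallback that avoids stochastic integrals entirely, I would argue deterministically. Set $\phi(\vect{\xi})=\vect{b}^\top\vect{g}(\vect{\xi})$ and write out the quadratic form
\[
\sum_{i,i'} a_ia_{i'}c_k(\vect{x}_i,\vect{x}_{i'})+2\sum_i a_i\int c_k(\vect{x}_i,\vect{\xi})\phi(\vect{\xi})\,d\vect{\xi}+\iint\phi\, c_k\,\phi .
\]
This is the limit of nonnegative finite quadratic forms of $c_k$, obtained by approximating the measure $\phi(\vect{\xi})\,d\vect{\xi}$ with discrete sums via continuity and boundedness of $c_k$. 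Hence it is nonnegative, and minimizing over $\vect{b}$ recovers the same Schur complement bound.
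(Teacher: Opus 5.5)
Your argument is correct and is essentially the one the paper relies on. The paper gives no separate proof; it presents $c_k^*$ as the conditional covariance of $z_k$ given $\vect{A}_k=\vect{0}$ and defers to Plumlee and Joseph (2018), and your Schur-complement computation with $\vect{b}=-\mat{H}_k^{-1}\mat{V}^\top\vect{a}$ makes that conditioning statement explicit.

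One simplification: joint Gaussianity of $(z_k(\vect{x}_1),\ldots,z_k(\vect{x}_m),\vect{A}_k^\top)$ is not needed, because the covariance matrix of any square-integrable random vector is positive semidefinite. You can therefore drop the Riemann-sum justification, which would be delicate anyway when $g_s$ is only in $L^2(\mathcal{X})$.
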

\begin{theorem} \label{thm: ortho_kernel}
Under the assumptions of \Cref{lem: ortho_kernel_psd} for
$k=1,\ldots,q$, suppose in addition that $\vect{g}(\cdot)$ is bounded
on $\mathcal{X}$. Let $z_1^*,\ldots,z_q^*$ be mutually independent,
mean-zero Gaussian processes, with $z_k^*$ having covariance
$c_k^*(\cdot,\cdot;\gpparam_k)$ defined in \Cref{eq:ck_star}. Let
$\mat{\Psi}$ be fixed and deterministic.
Then:
\begin{enumerate}
\item[(i)] 
\[
\int_{\mathcal{X}} \vect{g}(\vect{x})\,z_k^*(\vect{x})\,d\vect{x}=\vect{0}\in\mathbb{R}^r
\quad\text{with probability one}, \qquad k=1,\ldots,q.
\]
\item[(ii)] Let $\vect{z}^*(\vect{x})=(z_1^*(\vect{x}),\ldots,z_q^*(\vect{x}))^\top$ and define
$\vect{w}(\vect{x})=\mat{\Psi}\vect{z}^*(\vect{x})\in\mathbb{R}^p$. Then $\vect{w}(\cdot)$ satisfies
\[
\int_{\mathcal{X}} \vect{g}(\vect{x})\,\vect{w}(\vect{x})^\top\,d\vect{x}
=\mat{0}_{r\times p}
\quad\text{with probability one}.
\]
Therefore, $\vect{f}(\vect{x})=\mat{B}^\top\vect{g}(\vect{x})+\vect{w}(\vect{x})$ is a multi-output orthogonal Gaussian process, following \Cref{def: MOOGP}.
\end{enumerate}
\end{theorem}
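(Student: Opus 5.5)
For part (i), the plan is to show that the random vector $\vect{A}_k^* := \int_{\mathcal{X}} \vect{g}(\vect{x})\,z_k^*(\vect{x})\,d\vect{x}$ is a mean-zero Gaussian vector with zero covariance matrix, hence equal to $\vect{0}$ almost surely. Several facts are needed for this.

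\emph{Well-definedness.} Since $c_k^*$ is a bounded, continuous valid covariance (by \Cref{lem: ortho_kernel_psd}, plus boundedness of $c_k$, $\vect{h}_k$ and the finiteness of $\mat{H}_k^{-1}$), we can take a measurable version of $z_k^*$. Because $\mathcal{X}$ is bounded and $\vect{g}$ is bounded,
\[
\E\int_{\mathcal{X}} |g_s(\vect{x})\,z_k^*(\vect{x})|\,d\vect{x} \le \sup|g_s|\,\sup\sqrt{c_k^*}\,\mathrm{vol}(\mathcal{X})<\infty .
\]
Thus the integral exists almost surely.

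\emph{Gaussianity and moments.} The integral is an $L^2(\Omega)$ limit of Riemann-type sums of jointly Gaussian variables, so $\vect{A}_k^*$ is Gaussian. Fubini then gives $\E\vect{A}_k^*=\vect{0}$ and
\[
\operatorname{Cov}(\vect{A}_k^*)=\int_{\mathcal{X}}\int_{\mathcal{X}} \vect{g}(\vect{\xi})\,c_k^*(\vect{\xi},\vect{\xi}')\,\vect{g}(\vect{\xi}')^\top d\vect{\xi}\,d\vect{\xi}'.
\]
Fubini is justified by the boundedness of $\vect{g}$ and $c_k^*$ on the bounded domain.

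\emph{Direct computation.} Substitute \Cref{eq:ck_star} into this covariance. The first term gives $\mat{H}_k$ by definition \Cref{eq: h_ints}. For the correction term, use $\int \vect{g}(\vect{\xi})\vect{h}_k(\vect{\xi})^\top d\vect{\xi} = \mat{H}_k$; this follows from the definition of $\vect{h}_k$ by Fubini and the symmetry of $c_k$. Hence the correction term equals
\[
\mat{H}_k\mat{H}_k^{-1}\mat{H}_k=\mat{H}_k,
\]
so $\operatorname{Cov}(\vect{A}_k^*)=\mat{H}_k-\mat{H}_k=\mat{0}$. A mean-zero random vector with zero covariance is $\vect{0}$ almost surely, which proves (i).

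For part (ii), the argument is pure linearity. Since $\mat{\Psi}$ is deterministic,
\[
\int_{\mathcal{X}} \vect{g}(\vect{x})\vect{w}(\vect{x})^\top d\vect{x}=\Big(\int_{\mathcal{X}} \vect{g}(\vect{x})\vect{z}^*(\vect{x})^\top d\vect{x}\Big)\mat{\Psi}^\top .
\]
The $k$-th column of the bracketed matrix is $\vect{A}_k^*$. By (i), each of these vanishes on an event of probability one, and intersecting these $q$ events still gives probability one. So the whole matrix is $\mat{0}_{r\times p}$ almost surely. This is exactly \Cref{eq:ortho_cond}. Each $z_k^*$ is mean-zero Gaussian and the $z_k^*$ are independent, so $\vect{w}$ is jointly Gaussian, and \Cref{def: MOOGP} applies. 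Independence is not otherwise needed.

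The main obstacle is the measure-theoretic step: ensuring the pathwise integral is defined, is Gaussian, and that the expectation and integral may be exchanged. The first approach is to invoke joint measurability of a version of the process together with the integrability bound above. If this proves delicate, a fallback is to work in $L^2(\Omega)$: define the integral as a mean-square integral (which exists since $c_k^*$ is continuous and bounded), show $\E\|\vect{A}_k^*\|^2=\operatorname{tr}\operatorname{Cov}(\vect{A}_k^*)=0$, and note that the mean-square and pathwise integrals agree almost surely for a measurable version.
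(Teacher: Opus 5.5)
Your proposal is correct and follows essentially the same route as the paper's proof. Both arguments take a measurable version of $z_k^*$ and use boundedness of $\vect{g}$ and $c_k^*$ on the finite-measure domain to justify Fubini. They then compute $\operatorname{Cov}(\vect{A}_k^*)=\mat{H}_k-\mat{H}_k\mat{H}_k^{-1}\mat{H}_k=\mat{0}$, conclude $\vect{A}_k^*=\vect{0}$ almost surely, and obtain (ii) by linearity through the deterministic $\mat{\Psi}$ together with a finite intersection of probability-one events. Your extra claim that $\vect{A}_k^*$ is Gaussian is harmless but unnecessary: a mean-zero vector with zero covariance satisfies $\E\|\vect{A}_k^*\|_2^2=0$ and so vanishes almost surely regardless of its distribution, which is exactly how the paper concludes.
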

\Cref{app:proof_ortho_kernel} gives the proof. The conditions for \Cref{thm: ortho_kernel} are sufficient. 
In the next section, we show how the integrals in \Cref{eq: h_ints} have an analytical form for specific choices of $c_k(\cdot,\cdot)$, and how the latent-structure induces a covariance with efficient Kronecker forms, enabling the fast computation of $\mat{K}_y$ and $c_k^*(\cdot,\cdot)$.   
\subsection{Efficient computation of the covariance matrix}
\label{s:methods.3}
The MOOGP covariance in the log-likelihood can be written as
\[
\mat{K}_y = \sum_{k=1}^q \vect{\psi}_k \vect{\psi}_k\transpose \otimes \mat{C}_k^*  + \Sigerror \otimes \mat{I}_n,
\]
where the orthogonalized blocks $\mat{C}_k^*$ enforce the multi-output
orthogonality condition in \Cref{eq:ortho_cond}.

The orthogonalized kernels \(c_k^*(\cdot,\cdot;\gpparam_k)\) from \Cref{thm: ortho_kernel} enforce the orthogonality constraint \Cref{eq:ortho_cond}, but their practical usefulness is limited if the method is computationally inefficient. In particular, each evaluation of the marginal log-likelihood requires repeated construction of the covariance blocks $\mat{C}_k^*$ and repeated construction of variants of the output covariance $\mat{K}_y\inv$ and $\log|\mat{K}_y|$. Because $\mat K_y \in \mathbb R^{np\times np}$, direct computation scales poorly. We therefore exploit the structure of $\mat{C}_k^*$ and $\mat{K}_y$ to improve computational efficiency.

\subsubsection{Orthogonalized matrix calculations}
For each latent component $k=1,\dots,q$, let $\mat{W}_k =(\vect{h}_k(\vect{x}_i)\transpose)_{i=1}^n \in \R^{n \times r}$, where $\vect{h}_k(\cdot)$ and $\mat{H}_k$ are defined in \Cref{eq: h_ints}. Then, the orthogonalized kernel $c_k^*(\cdot,\cdot)$ has the covariance matrix
$ \mat{C}_k^* = \mat{C}_k - \mat{W}_k \mat{H}_k\inv \mat{W}_k \transpose.$ For each evaluation of $\mat{C}_k^*$, we must compute the high-dimensional integrals $\vect{h}_k(\cdot)$ and $\mat{H}_k$. Without further simplification, evaluating these integrals would significantly hinder computation. We therefore follow the general procedure to convert $\vect{h}_k(\cdot)$ and $\mat{H}_k$ to a product of one-dimensional integrals, which is detailed in \cite{plumlee2018orthogonal}.

Assume $\mathcal{X}=\prod_{j=1}^d [a_j,b_j]$, the covariance is separable, and each regression basis function is a product of one-dimensional functions. Fubini's theorem then reduces the high-dimensional integrals $\vect{h}_k(\cdot)$ and $\mat{H}_k$ to products of one-dimensional integrals. The implementation maps each coordinate affinely to $[-1,1]$ and uses
\[
c_k(\vect{x},\vect{x}';\gpparam_k)
=\sigma_k^2\exp\left\{-\sum_{j=1}^d
\frac{(x_j-x_j')^2}{\ell_{kj}^2}\right\},
\qquad
g_s(\vect{x})=\prod_{j\in\mathcal{J}_s}x_j,
\]
where $\ell_{kj}>0$ and each $\mathcal{J}_s$ is a subset of $\{1,\ldots,d\}$. For this specification, only zeroth- and first-order one-dimensional Gaussian moments are required, and symmetry makes $\mat{H}_k$ diagonal. \Cref{app:analytic_integrals} provides the closed forms for $\vect{h}_k(\cdot)$, $\mat{H}_k$, and $c_k^*(\cdot,\cdot)$.

\subsubsection{Covariance matrix calculation}
Although we have an efficient way to calculate $\mat{C}_k^*$, direct inversion of $\mat{K}_y$ is still computationally prohibitive with time complexity $\mathcal{O}(n^3p^3)$. We first apply the Woodbury identities (\Cref{eq: woodbury}) in an attempt for further simplification.
\begin{align}
    \mat{K}_y\inv &= \Sigerror\inv \otimes \mat{I}_n - (\Sigerror\inv \mat{\Psi} \otimes \mat{I}_n)[\mat{\Psi}\transpose \Sigerror\inv \mat{\Psi} \otimes \mat{I}_n + {\mat{C}^*}\inv]\inv (\mat{\Psi}\transpose \Sigerror\inv \otimes \mat{I}_n), \label{eq: Ky_inv1} \\
    |\mat{K}_y| &= |\Sigerror \otimes \mat{I}_n| \; |\mat{C}^*| \; |\mat{\Psi}\transpose \Sigerror\inv \mat{\Psi} \otimes \mat{I}_n + {\mat{C}^*}\inv | \label{eq: Ky_inv2}
\end{align}
In \Cref{eq: Ky_inv1}, the first term requires inverting $\Sigerror$, which is a
$p\times p$ matrix. The other inverse involves
$\mat{\Psi}\transpose\Sigerror\inv\mat{\Psi}\otimes\mat{I}_n
+{\mat{C}^*}\inv$, an $nq\times nq$ matrix. This matrix remains too large for
practical surrogate use. The following structure diagonalizes the
$q\times q$ interaction term
$\mat{\Psi}\transpose\Sigerror\inv\mat{\Psi}$. Let
\begin{align} \label{eq: Psi_param}
    \mat{\Psi} &= \Sigerror^{\frac{1}{2}} \mat{\Phi}, \quad \mat{\Phi}\transpose \mat{\Phi} = \mat{D}
\end{align}
where $\Sigerror^{1/2}$ is the symmetric positive-definite square root of
$\Sigerror$, $\Sigerror^{-1/2}$ is its inverse,
$\mat{\Phi}=(\vect{\phi}_1,\ldots,\vect{\phi}_q)\in\R^{p\times q}$ has mutually
orthogonal columns, and $\mat{D}=\diag{d_1,\ldots,d_q}$. The square root exists
because $\Sigerror$ is symmetric positive definite
\citep[ch.~21]{harville2008matrix}, and is immediate under the diagonal error
model of \Cref{s:moogp_model}.

Let $\mat{Y}=\mat{U}\mat{S}\mat{V}\transpose$ be its
thin singular value decomposition, with singular values
$s_1\geq\cdots\geq s_q>0$ and leading right singular vectors
$\mat{V}_q=(\vect{v}_1,\ldots,\vect{v}_q)$. Setting
\[
\mat{\Phi}
=\sqrt{n}\,\mat{V}_q\diag{s_1^{-1},\ldots,s_q^{-1}},
\qquad
\mat{D}
=\diag{n s_1^{-2},\ldots,n s_q^{-2}}
\]
satisfies $\mat{\Phi}\transpose\mat{\Phi}=\mat{D}$ and is the choice used in the
reported fits. The empirical basis $\mat{\Phi}$ and $\mat{D}$ are
treated as fixed plug-in quantities during optimization, and their
estimation uncertainty is not propagated. The loading matrix
$\mat{\Psi}$ varies with estimated parameters only through
$\Sigerror^{1/2}$.

Under \Cref{eq: Psi_param},
$\mat{\Psi}\transpose\Sigerror\inv\mat{\Psi}=\mat{D}$, and the
$nq\times nq$ matrix in \Cref{eq: Ky_inv1} becomes block diagonal:
\[
\mat{D}\otimes\mat{I}_n+{\mat{C}^*}\inv
=\bdiag{
d_1\mat{I}_n+{\mat{C}_1^*}\inv,\ldots,
d_q\mat{I}_n+{\mat{C}_q^*}\inv}.
\]
Substituting this expression into \Cref{eq: Ky_inv1} yields
\begin{align}
    \mat{K}_y \inv &= \Sigerror\inv \otimes \mat{I}_n - \sum_{k=1}^q \Sigerror^{-\frac{1}{2}} \vect{\phi}_k \vect{\phi}_k \transpose \Sigerror^{-\frac{1}{2}} \otimes (d_k \mat{I}_n + {\mat{C}_k^*}\inv)\inv .
\end{align}
A similar simplification applies to \Cref{eq: Ky_inv2}, resulting in 
\[
|\mat C^*|\;\big|\mat D\otimes \mat I_n + {\mat C^*}^{-1}\big|
= \prod_{k=1}^q |\mat C_k^*|\;|d_k\mat I_n + {\mat C_k^{*}}\inv|
= \prod_{k=1}^q \big|\mat I_n + d_k \mat C_k^*\big|
\]
As a result of these reductions, each likelihood evaluation requires only $q$ inversions of size $n \times n$, which is a significant improvement over $\mathcal{O}(n^3p^3)$ computations.

\section{Numerical illustrations} \label{s:illustration}

The two synthetic examples separate trend identification from prediction. The first uses a data-generating function with a known low-order trend to evaluate the effect of orthogonality under uniform and clustered input designs. The second uses the multi-fidelity test functions of \citet{forrester2008} in a $2\times2$ ablation of orthogonality and output sharing.

\subsection{Recovery of a known low-order trend} \label{s:illustration.trend}

Most benchmark functions are not specified as the sum of a low-order trend and an orthogonal deviation, so they do not provide target regression coefficients. The following construction provides such a target.

Let $\mathcal{X}=[-1,1]$, let $\vect{g}(x)=(1,x)\transpose$, and let $\mathcal{G}=\mathrm{span}\{1,x\}$ be the corresponding regression space. Begin with
$s(x)=\sin(-\pi x)$. Its $L^2(\mathcal X)$ projection onto $\mathcal G$ is
$\Pi_{\mathcal G}s(x)=-3x/\pi$.
Subtracting this projection and rescaling the residual to unit norm gives
\begin{align} \label{eq:wiggle}
w(x) := \kappa\inv \left(\sin(-\pi x) + \frac{3x}{\pi}\right), \qquad
\kappa^2 = 1-\frac{6}{\pi^2},
\end{align}
which satisfies $\int_{-1}^1 w(x)\,dx=\int_{-1}^1 x\,w(x)\,dx=0$, so $w\in\mathcal{G}^\perp$. The $p=3$ outputs are generated as
\begin{align} \label{eq:trend_generator}
y_\ell(x) = \underbrace{a_\ell + b_\ell x}_{\text{trend}} + \underbrace{\alpha_\ell\, w(x)}_{\text{deviation}} + \eps_\ell,
\qquad \eps_\ell \sim \mathcal{N}(0,\sigma^2_\ell), \qquad \ell=1,2,3,
\end{align}
with $\vect{a}=(2,8,-3)$, $\vect{b}=(12,-10,6)$, $\vect{\alpha}=(2.5,2.0,1.5)$, and $\vect{\sigma}=(0.5,0.4,0.3)$.

Because $w\in\mathcal{G}^\perp$, the pair $(a_\ell,b_\ell)$ gives the coefficients of the $L^2(\mathcal X)$ projection of the noise-free signal onto $\mathcal{G}$. Thus, any difference between a fitted coefficient and $(a_\ell,b_\ell)$ is attributable to estimation rather than ambiguity in the target trend. Inputs are generated directly on $[-1,1]$, and further input standardization is disabled, so the generator and MOOGP use the same fixed domain, basis, and Lebesgue measure.

Two input designs are used. The \emph{uniform} design is a one-dimensional Latin hypercube sample on $[-1,1]$, which distributes points evenly over the domain. The \emph{clustered} design applies the same stratified sampling on a logarithmic scale and then maps the points to $[-1,1]$. This design concentrates points near $x=1$ and leaves values near $x=-1$ sparsely covered. Such uneven coverage is common when runs accumulate in regions of interest rather than filling the domain.

Each design contains $n=60$ points. Three trend estimators are fit: MOOGP, the non-orthogonal MOGP obtained by replacing each $c_k^*$ with $c_k$, and ordinary least squares of $\vect{y}$ on $\vect{g}(x)$. Both GP models use $\vect{g}(x)=(1,x)\transpose$ and $q=p=3$. Orthogonalization is the only difference between them.

\begin{figure}[!t]
    \centering
    \includegraphics[width=1\linewidth]{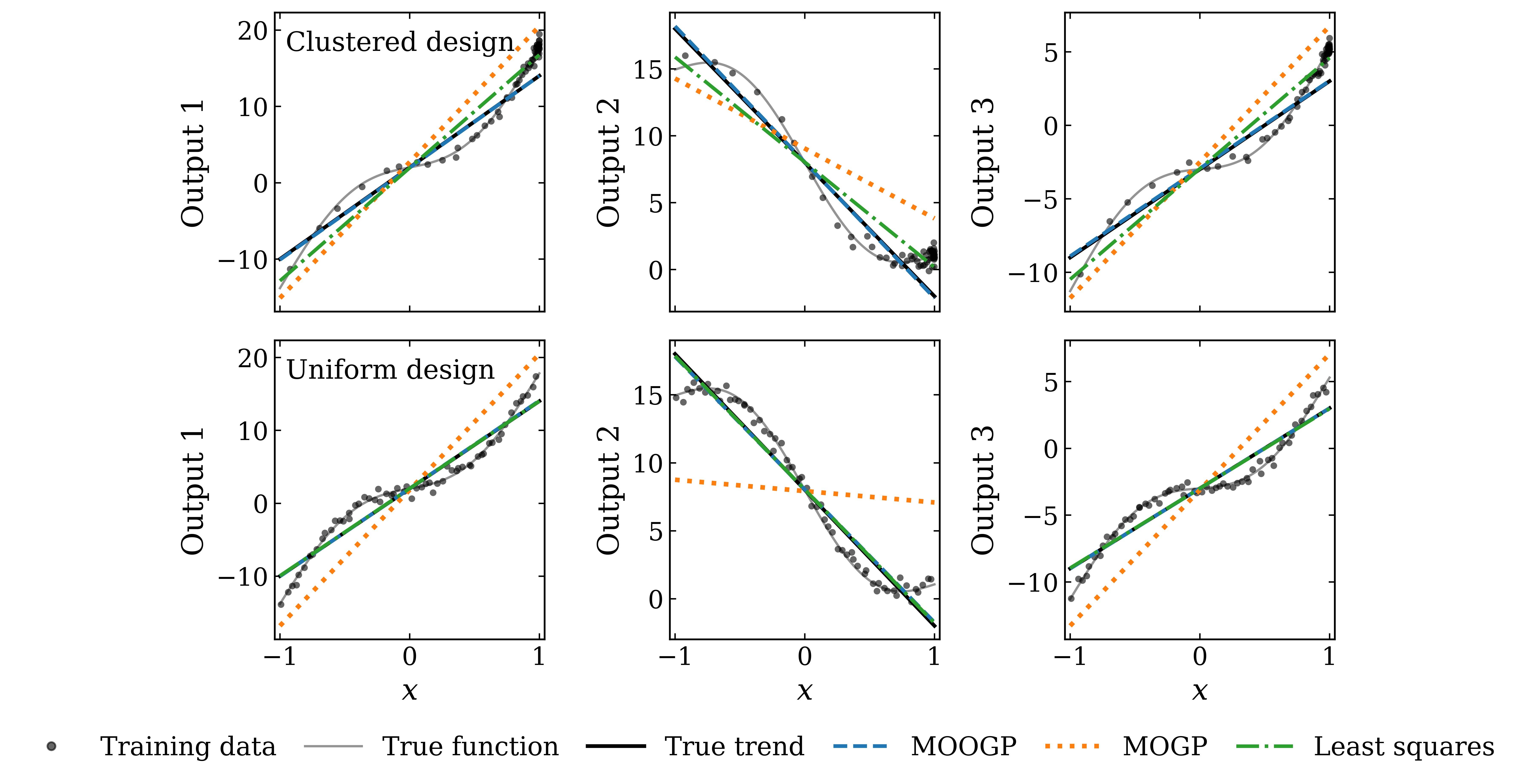}
    \caption{Trend recovery for the data-generating function in \Cref{eq:trend_generator} under the clustered design (top row) and uniform design (bottom row). Each output is a line plus a smooth deviation orthogonal to $\mathrm{span}\{1,x\}$, so the solid black line is both the data-generating trend and the $L^2(\mathcal X)$ projection of the noise-free signal. One replication is shown.}
    \label{fig:low_order_trend}
\end{figure}

\begin{table}[t]
\centering
\caption{Estimated slopes $\hat{b}_\ell$ for the function in \Cref{eq:trend_generator}. Entries are means over 100 replications, with standard deviations across replications in parentheses. The final column reports the mean absolute error in the slope coefficients, $(100p)^{-1}\sum_{r=1}^{100}\sum_{\ell=1}^p|\hat b_{r\ell}-b_\ell|$, over the replications and $p=3$ outputs.}
\label{tab:trend_recovery}
\small
\setlength{\tabcolsep}{5pt}
\renewcommand{\arraystretch}{1.15}
\begin{tabular}{llccc c}
\toprule
\textbf{Design} & \textbf{Model} & \textbf{Output 1} & \textbf{Output 2} & \textbf{Output 3} & \textbf{Mean abs.\ error} \\
\midrule
\multirow{3}{*}{\textbf{Clustered}}
& MOOGP         & $12.04 \, (0.19)$ & $-10.00 \, (0.15)$ & $ 5.98 \, (0.12)$ & $ 0.12$ \\
& MOGP          & $17.37 \, (1.14)$ & $-4.61 \, (1.57)$ & $ 9.28 \, (0.72)$ & $ 4.68$ \\
& Least squares & $14.72 \, (0.19)$ & $-7.85 \, (0.13)$ & $ 7.61 \, (0.11)$ & $ 2.16$ \\
\midrule
\multirow{3}{*}{\textbf{Uniform}}
& MOOGP         & $11.99 \, (0.12)$ & $-9.99 \, (0.09)$ & $ 6.01 \, (0.07)$ & $ 0.07$ \\
& MOGP          & $18.01 \, (1.11)$ & $-4.10 \, (1.47)$ & $ 9.81 \, (0.70)$ & $ 5.24$ \\
& Least squares & $12.00 \, (0.12)$ & $-9.99 \, (0.09)$ & $ 6.01 \, (0.07)$ & $ 0.07$ \\
\bottomrule
\end{tabular}
\end{table}

\Cref{fig:low_order_trend} shows one replication under each design, and \Cref{tab:trend_recovery} summarizes 100 replications over independent designs and noise draws. The non-orthogonal MOGP trend differs substantially from the target under both designs. Its mean estimated slope for output 1 is $17.37$, compared with the true value of $12$, and its mean estimated slope for output 2 is $-4.61$, compared with the true value of $-10$. The mean absolute slope error is $4.68$ under the clustered design and $5.24$ under the uniform design. Thus, in this example, the confounding cannot be attributed to uneven domain coverage. It arises from allowing the latent covariance to contain components in the regression space.

Least squares behaves differently. Under the uniform design, its mean absolute slope error is $0.07$. Under the clustered design, the error increases to $2.16$, although the slope signs remain correct. A linear trend estimator is biased by the component of $w$ that its weighting of the data projects onto $\mathcal{G}$. For least squares, this weighting corresponds to the empirical distribution of the design points. The relevant projection is therefore evaluated at the observed points rather than with respect to the Lebesgue measure on $\mathcal{X}$. A space-filling design approximates this measure, with respect to which $w$ is orthogonal, so the bias nearly vanishes. Under the clustered design, the mean slope of the discrete projection across replications is $1.08$. Multiplication by $\vect{\alpha}$ predicts mean signed slope biases of $(2.70, 2.16, 1.62)$ across the three outputs, which agree with the observed biases to within $0.02$.

MOOGP remains close to the target under both designs, with mean absolute slope errors of $0.12$ and $0.07$. Under the uniform design, its error is comparable to the least-squares error of $0.07$. Orthogonality prevents the latent covariance from representing components in the trend space, but it does not force a finite-sample generalized least-squares estimate from noisy data to equal the population $L^2(\mathcal X)$ projection. The experiment does not decompose the small remaining error by source. In this example, MOOGP estimates the trend jointly with the emulator and remains close to the target without requiring the input design to approximate the orthogonality measure.

\subsection{Separating orthogonality and output sharing} \label{s:illustration.forrester}

Orthogonality separates the regression component from the residual process,
whereas output sharing permits borrowing across correlated responses. A
matched $2\times2$ ablation separates these effects using the multi-fidelity
test functions of \citet{forrester2008}. Consider
$\vect{f}(x)=(f_1(x),f_2(x),f_3(x))^\top$, $x\in[0,1]$, where
\begin{align*}
f_1(x) &= (6x-2)^2\sin(12x-4), \\
f_2(x) &= 0.5f_1(x)+5(x-0.5)+5, \\
f_3(x) &= -0.8f_1(x)-5(x-0.5)-4.
\end{align*}
The outputs share a nonlinear component but have different linear trends.
Independent Gaussian errors are added with covariance
$\Sigerror=\diag{10,1,0.05}$.

The four models are three independently fit scalar GPs, three independently
fit scalar OGPs, MOGP, and MOOGP. Each scalar fit uses $p=q=1$, and the joint
models use $p=q=3$. All models use $\vect{g}(x)=(1,x)\transpose$, the same
kernel family, and the same fitting procedure. The GP--OGP and MOGP--MOOGP
comparisons therefore isolate orthogonality, while the independent--joint
comparisons isolate output sharing. Uniform and clustered training designs
contain $n=25$ inputs, with the clustered design concentrating inputs near
$x=1$. The experiment uses 200 paired replications. Within each replication,
all models and both designs are evaluated on the same independent uniform test
set of size $n'=500$. All fits converged.

The noise-free functions have different scales, so point prediction is
measured by standardized recovery RMSE. Let $\tau_\ell^2=
\int_0^1\{f_\ell(x)-\bar f_\ell\}^2\,dx$, where
$\bar f_\ell=\int_0^1f_\ell(x)\,dx$. Then
\begin{align*}
\mathrm{sRRMSE}
&=\left[
\frac{1}{n'p}\sum_{i=1}^{n'}\sum_{\ell=1}^p
\left\{
\frac{\mu_{\mathrm{pred},\ell}(x_i')-f_\ell(x_i')}{\tau_\ell}
\right\}^2
\right]^{1/2}.
\end{align*}
Trend RMSE is the integrated error between the fitted trend and the population
$L^2[0,1]$ linear projection of each output. Marginal coverage and the
multivariate Dawid--Sebastiani score, DSS$_{\mathrm{joint}}$, evaluate noisy
test responses \citep{gneiting2007strictly}. The latter uses the full
pointwise output covariance; this covariance is diagonal for the independent
models.

\begin{table}[t]
\centering
\caption{Forrester $2\times2$ ablation under uniform and clustered training
designs. Entries are means over 200 paired replications. Lower values are
preferred for sRRMSE, trend RMSE, and DSS$_{\mathrm{joint}}$; coverage is
assessed relative to $0.95$.}
\label{tab:model_comparison}
\small
\setlength{\tabcolsep}{5pt}
\renewcommand{\arraystretch}{1.15}
\begin{tabular}{
  ll
  S[table-format=1.3]
  S[table-format=1.3]
  S[table-format=1.3]
  S[table-format=1.3]
}
\toprule
\textbf{Design} & \textbf{Model} & \textbf{sRRMSE} &
\textbf{Trend RMSE} & \textbf{Coverage} &
\textbf{DSS$_{\mathrm{joint}}$} \\
\midrule
\multirow{4}{*}{\textbf{Uniform}}
& Independent GP  & 0.274 & 1.939 & 0.937 & 4.643 \\
& Independent OGP & 0.259 & 0.483 & 0.928 & 4.677 \\
& MOGP            & 0.211 & 1.721 & 0.923 & 5.562 \\
& MOOGP           & 0.217 & 0.479 & 0.920 & 5.605 \\
\midrule
\multirow{4}{*}{\textbf{Clustered}}
& Independent GP  & 0.522 & 2.611 & 0.947 & 7.571 \\
& Independent OGP & 0.514 & 1.455 & 0.926 & 7.778 \\
& MOGP            & 0.439 & 2.351 & 0.935 & 7.339 \\
& MOOGP           & 0.462 & 1.457 & 0.927 & 7.708 \\
\bottomrule
\end{tabular}
\end{table}

Output sharing reduces sRRMSE under both designs and both orthogonality
conditions. Relative to the corresponding independent model, the reduction is
23\% for MOGP and 16\% for MOOGP under the uniform design, and 16\% and 10\%
under the clustered design. All four paired 95\% confidence intervals exclude
zero. The gain is concentrated in the noisiest output, for which sharing
reduces sRRMSE by 20--27\%. Sharing increases error for the lowest-noise output
under the uniform design, indicating some negative transfer.

Orthogonality substantially reduces trend RMSE in both the independent and
joint comparisons. In contrast, MOOGP has slightly higher sRRMSE than MOGP
under both designs; the paired mean differences are $0.006$ with a 95\%
confidence interval of $(0.004,0.009)$ under the uniform design and $0.023$
with an interval of $(0.008,0.038)$ under the clustered design. Clustering
increases prediction error for every model, and MOGP has the lowest mean
sRRMSE and joint DSS in that setting. MOOGP nevertheless retains a predictive
advantage over independent OGP and recovers the population trend more
accurately than MOGP. In this example, the ablation therefore attributes the
main predictive gain to output sharing and the main trend-identification gain
to orthogonality, with a modest predictive cost for combining the two.

\section{Numerical experiment and application} \label{s:numerical}

\subsection{Multi-output Borehole benchmark}\label{s:borehole}

The experiment uses a multi-output modification of the Borehole benchmark, which
models water flow between two aquifers \citep{morris1993borehole}. Four simulator
inputs are varied, and the adapted response is evaluated at $p=5$ or $10$
two-dimensional outputs. Independent Gaussian errors are added to both
training and test responses. For each output, the error variance is $5\%$ of the
sample variance of its noise-free training response. Consequently, the reported
RMSE and coverage evaluate prediction of future noisy observations.

Training sizes are $n\in\{50,100,250,1000,2500\}$. Each $(n,p)$ setting
has five training-design replications paired across methods, and every
replication has an independently generated 800-run test design. For MOOGP, its
non-orthogonal MOGP ablation, LCGP \citep{chan2023highdimensional}, and OILMM
\citep{bruinsma2020scalable}, a common latent rank is selected from the
columnwise standardized noisy training response as the smallest rank whose
cumulative squared singular values exceed $99\%$ of the total. PUQ fits
independent heteroskedastic GPs through its \texttt{multihetGP} implementation
\citep{surer2025puq,binois2018practical,ogara2025hetgpy}. All methods use a 1000-iteration
limit. The runs were performed on a 32-vCPU AWS node with an Intel Xeon Platinum
8488C processor and 64 GiB of memory. Up to four fits ran concurrently, each
pinned to four dedicated physical cores with a four-thread numerical-backend
budget.

\Cref{fig:borehole-performance} compares test RMSE, empirical coverage of the marginal $95\%$ prediction intervals, and training time.
\begin{figure}[!htbp]
    \centering
    \includegraphics[width=\linewidth]{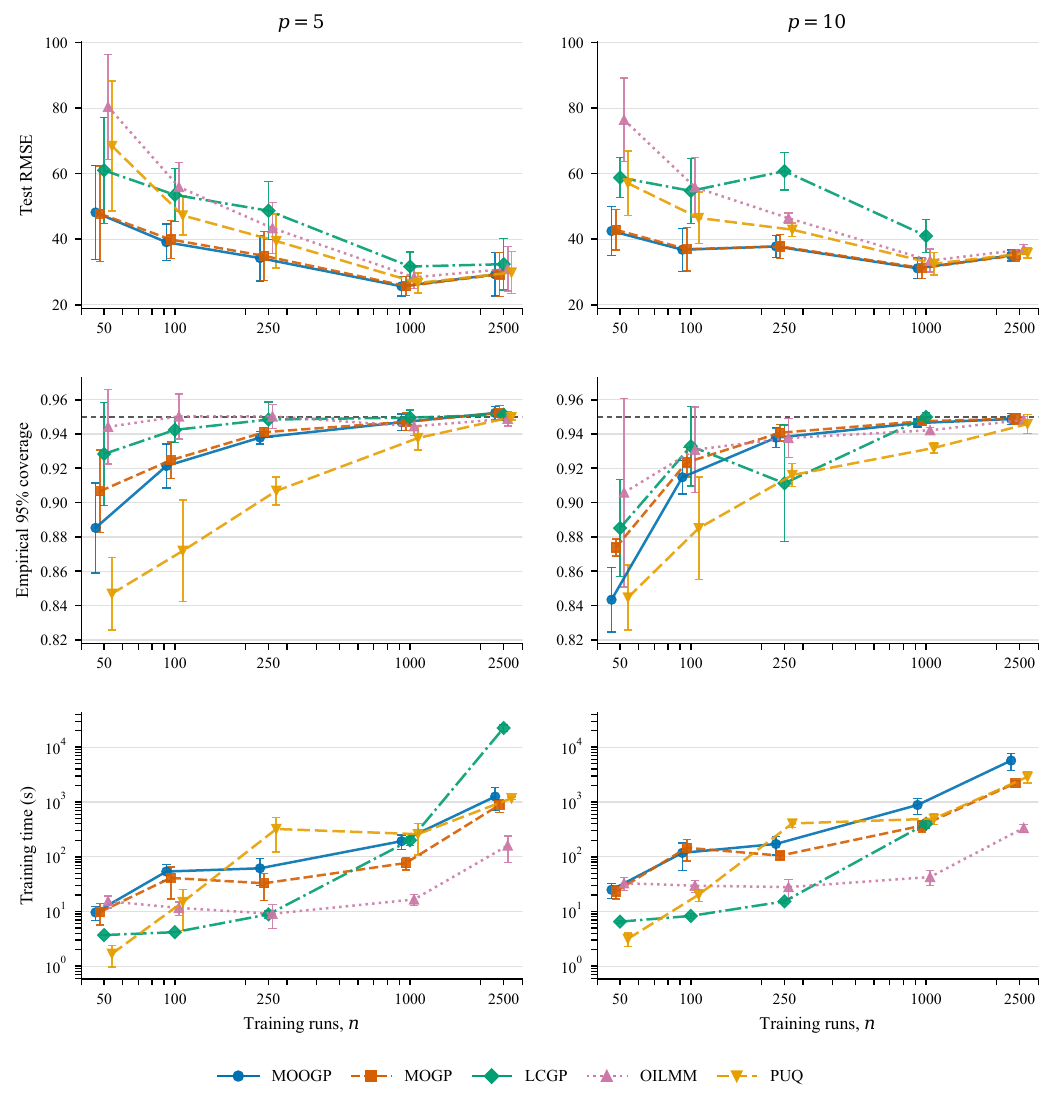}
    \caption{Borehole test performance across five paired replications for each $(n,p)$ setting. Faint marks show individual replications; large marks and bars show the mean and $\pm 1$ sample standard deviation. The dashed line marks the nominal $0.95$ coverage. Training size and training time use logarithmic scales. LCGP has no result for $n=2500,p=10$.}
    \label{fig:borehole-performance}
\end{figure}
The paired MOOGP--MOGP comparison does not show a consistent predictive advantage from orthogonality. One of these two models has the lowest mean RMSE in every setting, and their errors become nearly indistinguishable at the largest training sizes. Both methods approach nominal coverage as $n$ increases. OILMM is closer to nominal coverage in several small-sample settings but has substantially larger RMSE, while PUQ under-covers at small and intermediate training sizes.

PUQ is fastest at $n=50$, while LCGP is the fastest latent-factor method through
$n=250$. LCGP was not run for $n=2500,p=10$ after its $n=2500,p=5$ fits
required more than 22,000 seconds on average. OILMM is fastest at $n=1000$ and
$2500$. MOOGP is slower than MOGP for large $n$ and $p$, consistent with the
additional orthogonality calculations. PUQ returns optimizer convergence
warnings in most fits, which qualify its reported results.

\subsection{Viscous anisotropic hydrodynamics application}\label{s:application}

Viscous anisotropic hydrodynamics (VAH) is a high-fidelity model used in Bayesian calibration of relativistic heavy-ion collisions \citep{mcnelis20183+,mcnelis2021anisotropic,liyanage2023bayesian}. The data analyzed here consist of 541 VAH runs for Pb-Pb collisions at $\sqrt{s_{\mathrm{NN}}}=2.76$ TeV, with 15 input parameters and 98 observables organized into 11 multiplicity, transverse-energy, mean-transverse-momentum, and flow-harmonic families \citep{liyanage2023bayesian}. These dimensions make direct probabilistic calibration expensive and motivate a surrogate that can predict the full response vector while retaining interpretable low-order input effects.

All methods use the same deterministic five-fold partition. Each method's native
optimizer is configured with a maximum-iteration setting of \num{1000}, although
iteration counts are not directly comparable across implementations. The latent
rank is $q=5$ where applicable. The MOOGP and MOGP trends contain an intercept
and the 15 standardized main effects. MOOGP, MOGP, and LCGP use an 11-group
diagonal-noise specification specified by the observable families; OILMM and
PUQ do not expose an equivalent parameterization. The training for the VAH application is performed on an Apple-silicon MacBook Pro.

For evaluation, prediction errors and marginal predictive standard deviations
are divided by the corresponding sample standard deviation computed from the
training outputs in each fold. Standardized RMSE, interval width,
DSS per output, and coverage are then pooled over all
$541\times98$ held-out scalar predictions, so each simulator run contributes
once. Lower values are preferred except that coverage is assessed relative to
$0.95$. Raw RMSE is omitted because the 98 observables have different physical
scales.

\Cref{tab:vah-cross-validation} reports the pooled predictive metrics and fold-mean training times. Note that the timing results between the Borehole example and VAH application are not comparable due to different computational environments.
\begin{table}[!htbp]
\centering
\caption{Five-fold VAH cross-validation. Prediction metrics pool the held-out predictions, with output scales estimated from the corresponding training fold. Standardized RMSE, interval width, and DSS per output are better when smaller; coverage is assessed relative to $0.95$. Training time is the fold mean in seconds with sample standard deviation in parentheses.}
\label{tab:vah-cross-validation}
\small
\setlength{\tabcolsep}{4pt}
\renewcommand{\arraystretch}{1.12}
\begin{tabular}{@{}lrrrrr@{}}
\toprule
Method & sRMSE & Coverage & sWidth & sDSS & Time (SD) \\
\midrule
MOOGP & 0.413 & 0.922 & 1.431 & -0.758 & 87.7 (7.4) \\
MOGP  & 0.439 & 0.928 & 1.568 & -0.655 & 67.4 (9.5) \\
LCGP  & 0.846 & 0.900 & 3.075 &  0.961 & 24.3 (4.8) \\
OILMM & 0.776 & 0.923 & 2.843 &  0.536 & 10.8 (2.8) \\
PUQ   & 0.407 & 0.898 & 1.261 & -0.963 & 1175.8 (87.6) \\
\bottomrule
\end{tabular}
\end{table}
PUQ has the lowest standardized RMSE, narrowest intervals, and lowest DSS, but it under-covers and all five folds return optimizer convergence warnings. In this application, MOOGP has $6\%$ lower standardized RMSE, $9\%$ narrower intervals, and a lower DSS than MOGP, with slightly lower coverage and a longer training time. LCGP and OILMM are faster but have substantially worse standardized RMSE, width, and DSS.

\Cref{fig:vah-group-performance} reports standardized RMSE and coverage for the 11 observable families.
\begin{figure}[!htbp]
    \centering
    \includegraphics[width=\linewidth]{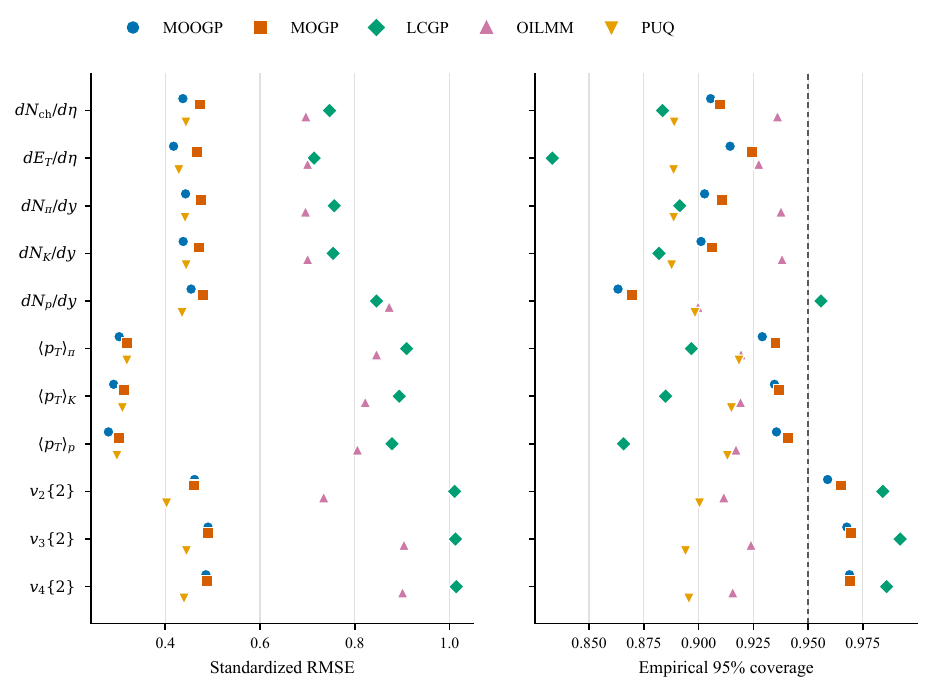}
    \caption{VAH five-fold cross-validation by observable family. Each mark pools all held-out runs and every output in the indicated family after standardization by the corresponding training-fold output scale. Standardized RMSE is better when smaller. The dashed vertical line in the coverage panel marks the nominal value $0.95$. The groups contain 98 observables in total and are defined before model fitting.}
    \label{fig:vah-group-performance}
\end{figure}
Within this cross-validation, MOOGP has lower standardized RMSE than MOGP in ten of the eleven groups and is most accurate for the three mean-$p_T$ families. Proton yield is the principal source of its undercoverage. PUQ is more accurate for the flow harmonics but under-covers in every group, while LCGP shows the greatest variation in coverage and the widest flow-harmonic intervals.

\Cref{fig:vah-trend-coefficients} compares the fold-mean standardized MOOGP coefficients with those from MOGP.
\begin{figure}[!htbp]
    \centering
    \includegraphics[width=\linewidth]{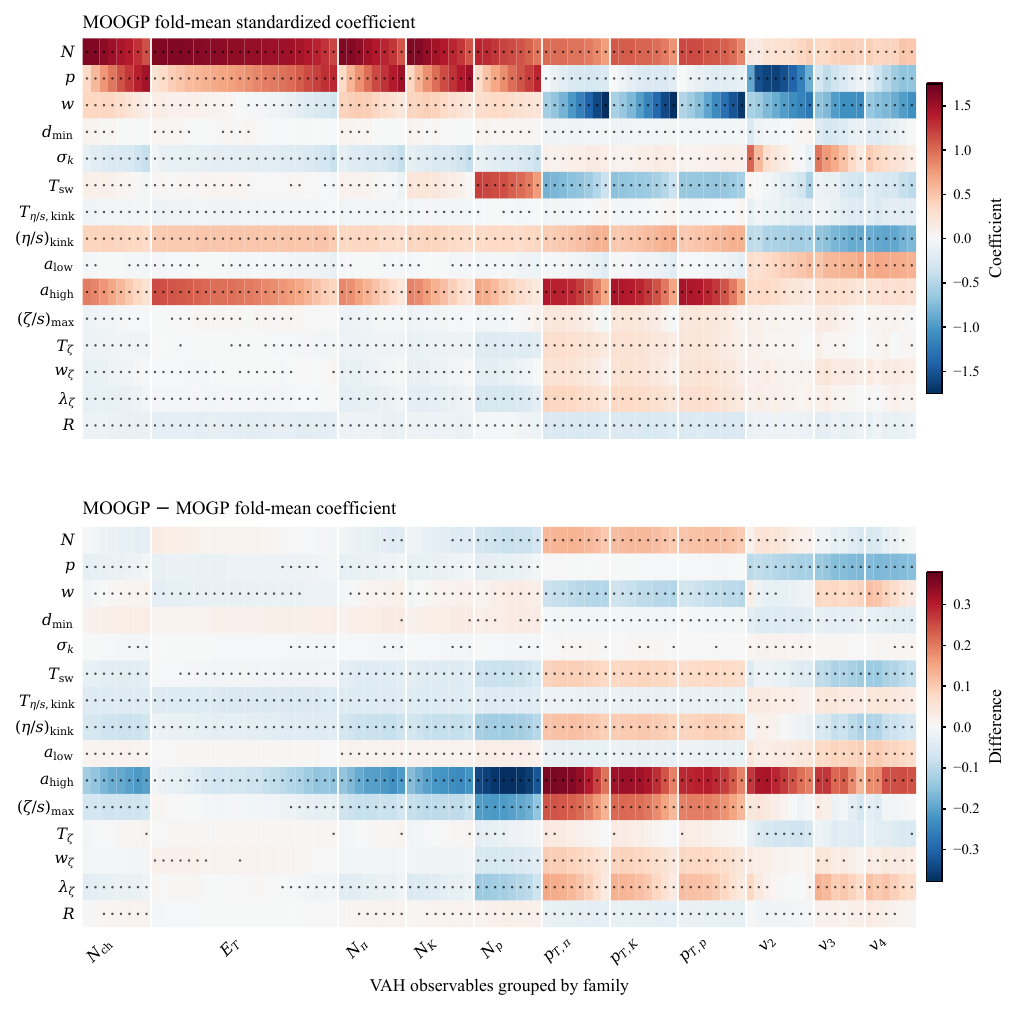}
    \caption{Standardized low-order trend coefficients for the VAH application. Rows are the 15 simulator inputs and columns are the 98 outputs, with vertical boundaries separating the 11 observable families. The upper panel shows the MOOGP fold mean; the lower panel shows the fold mean of MOOGP minus MOGP. Each panel has a separate symmetric color scale fixed at its largest absolute value. A black dot marks a coefficient whose sign is the same in at least four of the five folds. The intercept is omitted.}
    \label{fig:vah-trend-coefficients}
\end{figure}
Several estimated associations via the regression are stable across folds. The normalization parameter $N$ is positively associated with multiplicity and transverse energy, while the nucleon width $w$ is negatively associated with mean $p_T$ and the generalized-mean parameter $p$ with central $v_2\{2\}$. The largest stable MOOGP--MOGP differences involve the high-temperature shear-viscosity slope $a_{\mathrm{high}}$. While these fitted low-order associations should not be interpreted as causal physical effects and are specific to this set of VAH simulation data, the specific difference, most notably in $a_{\mathrm{high}}$ and $(\zeta/s)_\text{max}$, resulted from enforcing orthogonoality demonstrates interpretative quality of the proposed method.

\section{Conclusion} \label{s:conclusions_discussion}
We introduced the multi-output orthogonal Gaussian process model in this paper as an effective surrogate model for stochastic simulators.  The model consists of a deterministic flexible regression mean and a multi-output GP residual, where the residual is constructed via independent latent GP components.  The resulting model extends orthogonal GP to the vector-valued setting by restricting each latent component to lie in the orthogonal complement of the regression space.  

The numerical studies show when orthogonality matters.  Under a well-spread space-filling design, MOOGP and its non-orthogonal counterpart perform comparably. The constraint costs little when it is not needed. Under uneven input coverage, the non-orthogonal model lets the GP residual absorb variation belonging to the regression space, whereas MOOGP recovers the true trend and produces sharper predictions.  The heavy-ion collision case study confirms that these gains persist at realistic scale, with a 98-dimensional output emulated through a low-dimensional latent basis, and the timing comparisons show the proposed computations remain competitive as the number of outputs grows.

Follow-up studies are facilitated by this methodological development, including Bayesian parameter calibration of the VAH model.  MOOGP can study structural interaction and higher-order regression terms in the physics model \citep{ehlers2026varp}.  Since the orthogonalization decoupling identifies reliable trend coefficients and provides sharper predictions, it can strengthen identifiability between parameters and any discrepancy terms within multi-output model calibration \cite{kennedy2001bayesian,plumlee2017bayesian,tuo2019adjustments,xie2021bayesian}.
In our development, stationary covariance functions are primarily studied.  Similar orthogonality conditions may be developed for other covariances that addresses the mean-regressing behavior or boundary information in GPs \citep{tan2018gaussian,ding2019bdrygp,joseph2025rational}.  Extensions to input-dependent noise, non-Gaussian outputs, and fully Bayesian hyperparameter inference would widen the class of stochastic simulators the model can serve.

\section*{Disclosure statement}
The authors report there are no competing interests to declare.

\if0\blind{
\section*{Acknowledgments}
The authors acknowledge the support from the National Science Foundation under Grant No.~OAC-2004601. ECB acknowledges the additional support from the Northwestern University Office of Undergraduate Research.  The authors thank Dan Liyanage and Uli Heinz for approving the use of the heavy-ion collision simulation data analyzed in \Cref{s:application} and are grateful for the insightful physics discussions with Dan Liyanage and Sunil Jaiswal.  This research was supported in part through the computational resources and staff contributions provided for the Quest high performance computing facility at Northwestern University, which is jointly supported by the Office of the Provost, the Office for Research, and Northwestern University Information Technology.} \fi

\section*{Declaration of generative AI use} During the preparation of this manuscript, the authors used Claude (Anthropic, Claude Opus/Fable 5) and ChatGPT (OpenAI, GPT-5-sol) to check consistency of mathematical notation and writing style across sections, and to assist in writing plotting scripts for the figures. All outputs were reviewed, verified, and edited by the authors, who take full responsibility for the content of the publication.

\bibliographystyle{chicago}
\spacingset{1}
\bibliography{moogp}

@book{harville2008matrix,
  title={Matrix Algebra From a Statistician's Perspective},
  author={Harville, David A.},
  year={2008},
  publisher={Springer},
  address={New York},
  isbn={978-0-387-78356-7}
}

@article{plumlee2018orthogonal,
  title={Orthogonal Gaussian process models},
  author={Plumlee, Matthew and Joseph, V Roshan},
  journal={Statistica Sinica},
  volume={28},
  number={2},
  pages={601--619},
  year={2018},
  doi={10.5705/ss.202015.0404}
}

@article{morris1993borehole,
  title={Bayesian Design and Analysis of Computer Experiments: Use of Derivatives in Surface Prediction},
  author={Morris, Max D. and Mitchell, Toby J. and Ylvisaker, Donald},
  journal={Technometrics},
  volume={35},
  number={3},
  pages={243--255},
  year={1993},
  doi={10.1080/00401706.1993.10485320}
}

@inbook{forrester2008,
author = {Forrester, Sóbester, Keane},
publisher = {John Wiley \& Sons, Ltd},
isbn = {9780470770801},
title = {Multi-Fidelity Analysis},
booktitle = {Engineering Design via Surrogate Modelling},
chapter = {8},
pages = {167-177},
doi = {https://doi.org/10.1002/9780470770801.ch8},
url = {https://onlinelibrary.wiley.com/doi/abs/10.1002/9780470770801.ch8},
eprint = {https://onlinelibrary.wiley.com/doi/pdf/10.1002/9780470770801.ch8},
year = {2008}
}

@misc{ding2019bdrygp,
  title={BdryGP: a new Gaussian process model for incorporating boundary information},
  author={Ding, Liang and Mak, Simon and Wu, CF},
  journal={arXiv preprint arXiv:1908.08868},
  year={2019}
}

@article{tan2018gaussian,
  title={Gaussian process modeling with boundary information},
  author={Tan, Matthias Hwai Yong},
 journal = {Statistica Sinica},
 number = {2},
 pages = {621--648},
 publisher = {Institute of Statistical Science, Academia Sinica},
 volume = {28},
 year = {2018}
}

@article{joseph2025rational,
  title={Rational kriging},
  author={Joseph, V Roshan},
  journal={Journal of the American Statistical Association},
  volume={120},
  number={549},
  pages={548--558},
  year={2025},
  publisher={Taylor \& Francis}
}

@book{hardle2015multivariate,
  title={Multivariate statistics: exercises and solutions},
  author={H{\"a}rdle, Wolfgang and Hl{\'a}vka, Zden{\v{e}}k},
  year={2015},
  publisher={Springer}
}

@article{sacks1989design,
  title={Design and Analysis of Computer Experiments},
  author={Sacks, Jerome and Welch, William J. and Mitchell, Toby J. and Wynn, Henry P.},
  journal={Statistical Science},
  volume={4},
  number={4},
  pages={409--435},
  year={1989},
  doi={10.1214/ss/1177012413}
}

@book{santner2003design,
  title={The Design and Analysis of Computer Experiments},
  author={Santner, Thomas J. and Williams, Brian J. and Notz, William I.},
  series={Springer Series in Statistics},
  year={2003},
  publisher={Springer},
  address={New York},
  isbn={978-0-387-95420-2},
  doi={10.1007/978-1-4757-3799-8}
}

@book{joseph2026experimental,
  title={Experimental Design for Data Science and Engineering},
  author={Joseph, V. Roshan},
  series={Chapman \& Hall/CRC Texts in Statistical Science},
  year={2026},
  publisher={Chapman \& Hall/CRC},
  isbn={978-1-0411-1752-0},
  doi={10.1201/9781003661535}
}

@article{kennedy2001bayesian,
  title={Bayesian Calibration of Computer Models},
  author={Kennedy, Marc C. and O'Hagan, Anthony},
  journal={Journal of the Royal Statistical Society: Series B (Statistical Methodology)},
  volume={63},
  number={3},
  pages={425--464},
  year={2001},
  doi={10.1111/1467-9868.00294}
}

@book{rasmussen2006gaussian,
  title={Gaussian Processes for Machine Learning},
  author={Rasmussen, Carl Edward and Williams, Christopher K. I.},
  series={Adaptive Computation and Machine Learning},
  year={2006},
  publisher={MIT Press},
  address={Cambridge, MA},
  isbn={978-0-262-18253-9},
  url={https://gaussianprocess.org/gpml/}
}

@book{gramacy2020surrogates,
  title={Surrogates: Gaussian Process Modeling, Design, and Optimization for the Applied Sciences},
  author={Gramacy, Robert B.},
  series={Chapman \& Hall/CRC Texts in Statistical Science},
  year={2020},
  publisher={CRC Press},
  address={Boca Raton, FL},
  isbn={978-0-367-41542-6},
  doi={10.1201/9780367815493}
}

@article{ankenman2010stochastic,
  title={Stochastic Kriging for Simulation Metamodeling},
  author={Ankenman, Bruce and Nelson, Barry L. and Staum, Jeremy},
  journal={Operations Research},
  volume={58},
  number={2},
  pages={371--382},
  year={2010},
  doi={10.1287/opre.1090.0754}
}

@article{gramacy2012cases,
  title={Cases for the Nugget in Modeling Computer Experiments},
  author={Gramacy, Robert B. and Lee, Herbert K. H.},
  journal={Statistics and Computing},
  volume={22},
  number={3},
  pages={713--722},
  year={2012},
  doi={10.1007/s11222-010-9224-x}
}

@article{baker2022analyzing,
  title={Analyzing Stochastic Computer Models: A Review with Opportunities},
  author={Baker, Evan and Barbillon, Pierre and Fadikar, Arindam and Gramacy, Robert B. and Herbei, Radu and Higdon, David and Huang, Jiangeng and Johnson, Leah R. and Ma, Pulong and Mondal, Anirban and Pires, Bianica and Sacks, Jerome and Sokolov, Vadim},
  journal={Statistical Science},
  volume={37},
  number={1},
  pages={64--89},
  year={2022},
  doi={10.1214/21-STS822}
}

@article{fadikar2018calibrating,
  title={Calibrating a Stochastic, Agent-Based Model Using Quantile-Based Emulation},
  author={Fadikar, Arindam and Higdon, Dave and Chen, Jiangzhuo and Lewis, Bryan and Venkatramanan, Srinivasan and Marathe, Madhav},
  journal={SIAM/ASA Journal on Uncertainty Quantification},
  volume={6},
  number={4},
  pages={1685--1706},
  year={2018},
  doi={10.1137/17M1161233}
}

@article{marrel2012global,
  title={Global Sensitivity Analysis of Stochastic Computer Models with Joint Metamodels},
  author={Marrel, Amandine and Iooss, Bertrand and Da Veiga, S{\'e}bastien and Ribatet, Mathieu},
  journal={Statistics and Computing},
  volume={22},
  number={3},
  pages={833--847},
  year={2012},
  doi={10.1007/s11222-011-9274-8}
}

@article{herbei2014estimating,
  title={Estimating Ocean Circulation: An {MCMC} Approach with Approximated Likelihoods via the {Bernoulli} Factory},
  author={Herbei, Radu and Berliner, L. Mark},
  journal={Journal of the American Statistical Association},
  volume={109},
  number={507},
  pages={944--954},
  year={2014},
  doi={10.1080/01621459.2014.914439}
}

@article{binois2018practical,
  title={Practical Heteroscedastic Gaussian Process Modeling for Large Simulation Experiments},
  author={Binois, Micka{\"e}l and Gramacy, Robert B. and Ludkovski, Mike},
  journal={Journal of Computational and Graphical Statistics},
  volume={27},
  number={4},
  pages={808--821},
  year={2018},
  doi={10.1080/10618600.2018.1458625}
}

@article{conti2010bayesian,
  title={Bayesian Emulation of Complex Multi-Output and Dynamic Computer Models},
  author={Conti, Stefano and O'Hagan, Anthony},
  journal={Journal of Statistical Planning and Inference},
  volume={140},
  number={3},
  pages={640--651},
  year={2010},
  doi={10.1016/j.jspi.2009.08.006}
}

@article{higdon2008computer,
  title={Computer Model Calibration Using High-Dimensional Output},
  author={Higdon, Dave and Gattiker, James and Williams, Brian and Rightley, Maria},
  journal={Journal of the American Statistical Association},
  volume={103},
  number={482},
  pages={570--583},
  year={2008},
  doi={10.1198/016214507000000888}
}

@article{phillips2021band,
  title={Get on the {BAND} Wagon: A {Bayesian} Framework for Quantifying Model Uncertainties in Nuclear Dynamics},
  author={Phillips, Daniel R. and Furnstahl, Richard J. and Heinz, Ulrich and Maiti, Tapabrata and Nazarewicz, Witold and Nunes, Filomena M. and Plumlee, Matthew and Pratola, Matthew T. and Pratt, Scott and Viens, Frederi G. and Wild, Stefan M.},
  journal={Journal of Physics G: Nuclear and Particle Physics},
  volume={48},
  number={7},
  pages={072001},
  year={2021},
  doi={10.1088/1361-6471/abf1df}
}

@article{liyanage2023bayesian,
  title={Bayesian Calibration of Viscous Anisotropic Hydrodynamic Simulations of Heavy-Ion Collisions},
  author={Liyanage, Dananjaya and S{\"u}rer, {\"O}zge and Plumlee, Matthew and Wild, Stefan M. and Heinz, Ulrich},
  journal={Physical Review C},
  volume={108},
  number={5},
  pages={054905},
  year={2023},
  doi={10.1103/PhysRevC.108.054905}
}

@article{gu2016parallel,
  title={Parallel Partial Gaussian Process Emulation for Computer Models with Massive Output},
  author={Gu, Mengyang and Berger, James O.},
  journal={The Annals of Applied Statistics},
  volume={10},
  number={3},
  pages={1317--1347},
  year={2016},
  doi={10.1214/16-AOAS934}
}

@article{ogara2025hetgpy,
  title={{hetGPy}: Heteroskedastic Gaussian Process Modeling in Python},
  author={O'Gara, David and Binois, Micka{\"e}l and Garnett, Roman and Hammond, Ross A.},
  journal={Journal of Open Source Software},
  volume={10},
  number={106},
  pages={7518},
  year={2025},
  doi={10.21105/joss.07518}
}

@phdthesis{chan2023highdimensional,
  title={High-Dimensional Gaussian Process Methods for Uncertainty Quantification},
  author={Chan, Moses Y.-H.},
  school={Northwestern University},
  address={Evanston, IL},
  year={2023},
  month={12},
  note={ProQuest Dissertations and Theses, 30693493}
}

@article{chan2024constructing,
  title={Constructing a Simulation Surrogate with Partially Observed Output},
  author={Chan, Moses Y.-H. and Plumlee, Matthew and Wild, Stefan M.},
  journal={Technometrics},
  volume={66},
  number={1},
  pages={1--13},
  year={2024},
  doi={10.1080/00401706.2023.2210170}
}

@article{rougier2008efficient,
  title={Efficient Emulators for Multivariate Deterministic Functions},
  author={Rougier, Jonathan},
  journal={Journal of Computational and Graphical Statistics},
  volume={17},
  number={4},
  pages={827--843},
  year={2008},
  doi={10.1198/106186008X384032}
}

@inproceedings{teh2005semiparametric,
  title={Semiparametric Latent Factor Models},
  author={Teh, Yee Whye and Seeger, Matthias and Jordan, Michael I.},
  booktitle={Proceedings of the Tenth International Workshop on Artificial Intelligence and Statistics},
  pages={333--340},
  year={2005},
  editor={Cowell, Robert G. and Ghahramani, Zoubin},
  volume={R5},
  series={Proceedings of Machine Learning Research},
  publisher={PMLR},
  url={https://proceedings.mlr.press/r5/teh05a.html}
}

@inproceedings{bonilla2008multi,
  title={Multi-Task Gaussian Process Prediction},
  author={Bonilla, Edwin V. and Chai, Kian M. and Williams, Christopher K. I.},
  booktitle={Advances in Neural Information Processing Systems 20},
  pages={153--160},
  year={2008},
  editor={Platt, J. C. and Koller, D. and Singer, Y. and Roweis, S.},
  publisher={Curran Associates, Inc.},
  url={https://proceedings.neurips.cc/paper/2007/hash/66368270ffd51418ec58bd793f2d9b1b-Abstract.html}
}

@article{alvarez2011computationally,
  title={Computationally Efficient Convolved Multiple Output Gaussian Processes},
  author={{\'A}lvarez, Mauricio A. and Lawrence, Neil D.},
  journal={Journal of Machine Learning Research},
  volume={12},
  number={41},
  pages={1459--1500},
  year={2011},
  url={https://www.jmlr.org/papers/v12/alvarez11a.html}
}

@article{alvarez2012kernels,
  title={Kernels for Vector-Valued Functions: A Review},
  author={{\'A}lvarez, Mauricio A. and Rosasco, Lorenzo and Lawrence, Neil D.},
  journal={Foundations and Trends in Machine Learning},
  volume={4},
  number={3},
  pages={195--266},
  year={2012},
  doi={10.1561/2200000036}
}

@inproceedings{bruinsma2020scalable,
  title={Scalable Exact Inference in Multi-Output Gaussian Processes},
  author={Bruinsma, Wessel and Perim, Eric and Tebbutt, William and Hosking, Scott and Solin, Arno and Turner, Richard},
  booktitle={Proceedings of the 37th International Conference on Machine Learning},
  pages={1190--1201},
  year={2020},
  editor={Daum{\'e} III, Hal and Singh, Aarti},
  volume={119},
  series={Proceedings of Machine Learning Research},
  publisher={PMLR},
  url={https://proceedings.mlr.press/v119/bruinsma20a.html}
}

@article{fricker2013multivariate,
  title={Multivariate Gaussian Process Emulators with Nonseparable Covariance Structures},
  author={Fricker, Thomas E. and Oakley, Jeremy E. and Urban, Nathan M.},
  journal={Technometrics},
  volume={55},
  number={1},
  pages={47--56},
  year={2013},
  doi={10.1080/00401706.2012.715835}
}

@article{paulo2012calibration,
  title={Calibration of Computer Models with Multivariate Output},
  author={Paulo, Rui and Garc{\'i}a-Donato, Gonzalo and Palomo, Jes{\'u}s},
  journal={Computational Statistics \& Data Analysis},
  volume={56},
  number={12},
  pages={3959--3974},
  year={2012},
  doi={10.1016/j.csda.2012.05.023}
}

@misc{surer2025puq,
  title={{PUQ} Users Manual},
  author={S{\"u}rer, {\"O}zge and O'Gara, David and Plumlee, Matthew and Wild, Stefan M.},
  number={Version 0.1.1},
  year={2025},
  url={https://github.com/parallelUQ/PUQ}
}

@article{liu2018remarks,
  title={Remarks on Multi-Output Gaussian Process Regression},
  author={Liu, Haitao and Cai, Jianfei and Ong, Yew-Soon},
  journal={Knowledge-Based Systems},
  volume={144},
  pages={102--121},
  year={2018},
  doi={10.1016/j.knosys.2017.12.034}
}

@article{gneiting2007strictly,
  title={Strictly Proper Scoring Rules, Prediction, and Estimation},
  author={Gneiting, Tilmann and Raftery, Adrian E.},
  journal={Journal of the American Statistical Association},
  volume={102},
  number={477},
  pages={359--378},
  year={2007},
  doi={10.1198/016214506000001437}
}

@article{li2026additive,
  title={Additive multi-index Gaussian process modeling, with application to multi-physics surrogate modeling of the quark-gluon plasma},
  author={Li, Kevin and Mak, Simon and Paquet, J-F and Bass, Steffen A},
  journal={Journal of the American Statistical Association},
  volume={121},
  number={553},
  pages={44--59},
  year={2026},
  publisher={Taylor \& Francis}
}

@misc{lartaud2022multi,
  title={Multi-output Gaussian processes for inverse uncertainty quantification in neutron noise analysis},
  author={Lartaud, Paul and Humbert, Philippe and Garnier, Josselin},
  journal={arXiv preprint arXiv:2211.02465},
  year={2022}
}

@article{plumlee2021high,
  title={High-fidelity hurricane surge forecasting using emulation and sequential experiments},
  author={Plumlee, Matthew and Asher, Taylor G and Chang, Won and Bilskie, Matthew V},
  journal={The Annals of Applied Statistics},
  volume={15},
  number={1},
  pages={460--480},
  year={2021},
  publisher={Institute of Mathematical Statistics}
}

@article{higdon2015bayesian,
  title={A Bayesian approach for parameter estimation and prediction using a computationally intensive model},
  author={Higdon, Dave and McDonnell, Jordan D and Schunck, Nicolas and Sarich, Jason and Wild, Stefan M},
  journal={Journal of Physics G: Nuclear and Particle Physics},
  volume={42},
  number={3},
  pages={034009},
  year={2015},
  publisher={IOP Publishing}
}

@article{hung2015analysis,
  title={Analysis of computer experiments with functional response},
  author={Hung, Ying and Joseph, V Roshan and Melkote, Shreyes N},
  journal={Technometrics},
  volume={57},
  number={1},
  pages={35--44},
  year={2015},
  publisher={Taylor \& Francis}
}

@article{plumlee2017bayesian,
  title={Bayesian calibration of inexact computer models},
  author={Plumlee, Matthew},
  journal={Journal of the American Statistical Association},
  volume={112},
  number={519},
  pages={1274--1285},
  year={2017},
  publisher={Taylor \& Francis}
}

@article{tuo2019adjustments,
  title={Adjustments to computer models via projected kernel calibration},
  author={Tuo, Rui},
  journal={SIAM/ASA Journal on Uncertainty Quantification},
  volume={7},
  number={2},
  pages={553--578},
  year={2019},
  publisher={SIAM}
}

@article{xie2021bayesian,
  title={Bayesian projected calibration of computer models},
  author={Xie, Fangzheng and Xu, Yanxun},
  journal={Journal of the American Statistical Association},
  volume={116},
  number={536},
  pages={1965--1982},
  year={2021},
  publisher={Taylor \& Francis}
}

@misc{ehlers2026varp,
  title={VarP-GP: cost-efficient Bayesian emulation of quark-gluon plasma modeling with variable statistical precision},
  author={Ehlers, R and Ji, Y and Jacobs, PM and Mak, S},
  journal={arXiv preprint arXiv:2603.05545},
  year={2026}
}

@article{mcnelis20183+,
  title={(3+ 1)-dimensional anisotropic fluid dynamics with a lattice QCD equation of state},
  author={McNelis, M and Bazow, D and Heinz, U},
  journal={Physical Review C},
  volume={97},
  number={5},
  pages={054912},
  year={2018},
  publisher={APS}
}

@article{mcnelis2021anisotropic,
title = {Anisotropic fluid dynamical simulations of heavy-ion collisions},
journal = {Computer Physics Communications},
volume = {267},
pages = {108077},
year = {2021},
issn = {0010-4655},
doi = {https://doi.org/10.1016/j.cpc.2021.108077},
url = {https://www.sciencedirect.com/science/article/pii/S0010465521001892},
author = {Mike McNelis and Dennis Bazow and Ulrich Heinz},
}

@misc{doob1953stochastic,
 author = {Doob, Joseph L.},
 title = {Stochastic processes},
 year = {1953},
 language = {English},
 howpublished = {New {York}: {Wiley}. 654 p. (1953).},
 zbMATH = {3085434},
 Zbl = {0053.26802}
}

\newpage

\appendix
\crefalias{section}{appendix}
\crefalias{subsection}{appendix}

\spacingset{1.5}

\begin{center}
    \Large Appendix for ``\titletext''
\end{center}

\section{Proof of \Cref{thm: ortho_kernel}}
\label{app:proof_ortho_kernel}

\begin{proof}
Fix $k$ and suppress the latent-component subscript. 
Because $c^*$ is continuous on $\mathcal{X} \times \mathcal{X},$ $z^*$ is mean-square continuous, implying continuity in probability.  Hence, a measurable $z^*$ via standard modification can be chosen \citep[ch.~2]{doob1953stochastic}.
Since $\mathcal{X}$ has finite measure and $\vect{g}$ and $c^*$ are
bounded, the stochastic integrals below exist and Fubini's theorem applies to
their first and second moments. Define
\[
\vect{A}:=\int_{\mathcal{X}}
\vect{g}(\vect{x})z^*(\vect{x})\,d\vect{x}.
\]
This random vector has mean zero. Using \Cref{eq:ck_star}, kernel symmetry, and
the definitions of $\vect{h}$ and $\mat{H}$ in \Cref{eq: h_ints},
\begin{align*}
\cov{\vect{A}}
&=
\int_{\mathcal{X}}\int_{\mathcal{X}}
\vect{g}(\vect{x})c^*(\vect{x},\vect{x}')
\vect{g}(\vect{x}')^\top\,d\vect{x}\,d\vect{x}'\\
&=
\mat{H}
-
\left(\int_{\mathcal{X}}\vect{g}(\vect{x})
\vect{h}(\vect{x})^\top\,d\vect{x}\right)
\mat{H}^{-1}
\left(\int_{\mathcal{X}}\vect{h}(\vect{x}')
\vect{g}(\vect{x}')^\top\,d\vect{x}'\right)\\
&=
\mat{H}-\mat{H}\mat{H}^{-1}\mat{H}
=\mat{0}_{r\times r}.
\end{align*}
Therefore,
$\E[\lVert\vect{A}\rVert_2^2]=\tr{\cov{\vect{A}}}=0$, so
$\vect{A}=\vect{0}$ with probability one. This proves part~(i).

Because $q$ is finite, part~(i) holds simultaneously for every latent
component. Hence,
\begin{align*}
\int_{\mathcal{X}}\vect{g}(\vect{x})
\vect{w}(\vect{x})^\top\,d\vect{x}
&=
\int_{\mathcal{X}}\vect{g}(\vect{x})
\vect{z}^*(\vect{x})^\top\,d\vect{x}\,\mat{\Psi}^\top\\
&=
\begin{bmatrix}
\vect{0} & \cdots & \vect{0}
\end{bmatrix}\mat{\Psi}^\top
=\mat{0}_{r\times p},
\end{align*}
which proves part~(ii).
\end{proof}

\section{Closed Forms for Squared-Exponential Integrals}
\label{app:analytic_integrals}

Following \citet{plumlee2018orthogonal}, we reproduce the component-wise closed forms
used by MOOGP. They assume a separable squared-exponential kernel on the
standardized domain $[-1,1]^d$ and basis functions formed from products of
distinct input coordinates. Output mixing and observation noise are applied only
after each $c_k^*$ is constructed, so neither enters the calculations below.

Suppress the latent-component subscript and let
\begin{equation*}
c(\vect{x},\vect{x}')
=\sigma^2\prod_{j=1}^d k_j(x_j,x_j'),
\qquad
k_j(u,v)=\exp\left\{-\frac{(u-v)^2}{\ell_j^2}\right\},
\label{eq:app_se_kernel}
\end{equation*}
where $\sigma^2>0$ and $\ell_j>0$. For
$g_s(\vect{x})=\prod_{j\in\mathcal{J}_s}x_j$, with
$\mathcal{J}_s\subseteq\{1,\ldots,d\}$ and the empty product equal to one,
define
\begin{align}
M_j(t)
&:=\int_{-1}^1 k_j(t,u)\,du, &
L_j(t)
&:=\int_{-1}^1 u\,k_j(t,u)\,du, \label{eq:app_ML_def}\\
I_{M,j}
&:=\int_{-1}^1\int_{-1}^1 k_j(u,v)\,du\,dv, &
I_{LL,j}
&:=\int_{-1}^1\int_{-1}^1
uv\,k_j(u,v)\,du\,dv. \label{eq:app_IMILL_def}
\end{align}
Their closed forms are
\begin{align}
M_j(t)
&=
\frac{\sqrt{\pi}\ell_j}{2}
\left[
\operatorname{erf}\left(\frac{1+t}{\ell_j}\right)
+
\operatorname{erf}\left(\frac{1-t}{\ell_j}\right)
\right], \label{eq:app_M_closed}\\
L_j(t)
&=
tM_j(t)
+
\frac{\ell_j^2}{2}
\left[
\exp\left\{-\frac{(1+t)^2}{\ell_j^2}\right\}
-
\exp\left\{-\frac{(1-t)^2}{\ell_j^2}\right\}
\right], \label{eq:app_L_closed}\\
I_{M,j}
&=
2\sqrt{\pi}\ell_j
\operatorname{erf}\left(\frac{2}{\ell_j}\right)
-
\ell_j^2\left(1-\exp\left\{-\frac{4}{\ell_j^2}\right\}\right), \label{eq:app_IM_closed}\\
I_{LL,j}
&=
\frac{\ell_j^4}{6}
\left(1-\exp\left\{-\frac{4}{\ell_j^2}\right\}\right)
-
\frac{\ell_j^2}{3}
\left(3-\exp\left\{-\frac{4}{\ell_j^2}\right\}\right)
+
\frac{2\sqrt{\pi}\ell_j}{3}
\operatorname{erf}\left(\frac{2}{\ell_j}\right).
\label{eq:app_ILL_closed}
\end{align}

Symmetry makes both mixed moments zero and $\mat{H}$ diagonal. Consequently,
\begin{equation*}
\begin{alignedat}{2}
h_s(\vect{x})
&={}\sigma^2
\prod_{j\in\mathcal{J}_s}L_j(x_j)
\prod_{j\notin\mathcal{J}_s}M_j(x_j),
&\qquad
H_{ss}
&={}\sigma^2
\prod_{j\in\mathcal{J}_s}I_{LL,j}
\prod_{j\notin\mathcal{J}_s}I_{M,j}.
\end{alignedat}
\label{eq:app_h_H_closed}
\end{equation*}
Thus, each orthogonalized latent covariance is
\begin{equation*}
c^*(\vect{x},\vect{x}')
=
\sigma^2\exp\left\{-\sum_{j=1}^d
\frac{(x_j-x_j')^2}{\ell_j^2}\right\}
-
\sum_{s=1}^r
\frac{h_s(\vect{x})h_s(\vect{x}')}{H_{ss}}.
\label{eq:app_cstar_closed}
\end{equation*}


\end{document}